\newtheorem{theorem}{Theorem}
\newtheorem{axiom}[theorem]{Axiom}
\newtheorem{conjecture}[theorem]{Conjecture}
\newtheorem{corollary}[theorem]{Corollary}
\newtheorem{definition}[theorem]{Definition}
\newtheorem{example}[theorem]{Example}
\newtheorem{exercise}[theorem]{Exercise}
\newtheorem{lemma}[theorem]{Lemma}
\newtheorem{proposition}[theorem]{Proposition}
\newtheorem{remark}[theorem]{Remark}
\newenvironment{proof}[1][Proof]{\noindent\textbf{#1.} }{\ \rule{0.5em}{0.5em}}
\let\pdfoutput=\undefined\fi
\chardef\@x10\chardef\@xv60
\def\tcitime{
\def\@time{%
  \@minute\time\@hour\@minute\divide\@hour\@xv
  \ifnum\@hour<\@x 0\fi\the\@hour:%
  \multiply\@hour\@xv\advance\@minute-\@hour
  \ifnum\@minute<\@x 0\fi\the\@minute
  }}%
\def\x@hyperref#1#2#3{%
   % Turn off various catcodes before reading parameter 4
   \catcode`\~ = 12
   \catcode`\$ = 12
   \catcode`\_ = 12
   \catcode`\# = 12
   \catcode`\& = 12
   \catcode`\% = 12
   \y@hyperref{#1}{#2}{#3}%
}
\def\y@hyperref#1#2#3#4{%
   #2\ref{#4}#3
   \catcode`\~ = 13
   \catcode`\$ = 3
   \catcode`\_ = 8
   \catcode`\# = 6
   \catcode`\& = 4
   \catcode`\% = 14
}
\def\QCTOpt[#1]#2{%
  \def\QCTOptB{#1}
  \def\QCTOptA{#2}
}
\def\QCTNOpt#1{%
  \def\QCTOptA{#1}
  \let\QCTOptB\empty
}
\def\Qct{%
  \@ifnextchar[{%
    \QCTOpt}{\QCTNOpt}
}
\def\QCBOpt[#1]#2{%
  \def\QCBOptB{#1}%
  \def\QCBOptA{#2}%
}
\def\QCBNOpt#1{%
  \def\QCBOptA{#1}%
  \let\QCBOptB\empty
}
\def\Qcb{%
  \@ifnextchar[{%
    \QCBOpt}{\QCBNOpt}%
}
\def\PrepCapArgs{%
  \ifx\QCBOptA\empty
    \ifx\QCTOptA\empty
      {}%
    \else
      \ifx\QCTOptB\empty
        {\QCTOptA}%
      \else
        [\QCTOptB]{\QCTOptA}%
      \fi
    \fi
  \else
    \ifx\QCBOptA\empty
      {}%
    \else
      \ifx\QCBOptB\empty
        {\QCBOptA}%
      \else
        [\QCBOptB]{\QCBOptA}%
      \fi
    \fi
  \fi
}
\def\GRAPHICSPS#1{%
 \ifcase\GRAPHICSTYPE%\GRAPHICSTYPE=0
   \special{ps: #1}%
 \or%\GRAPHICSTYPE=1
   \special{language "PS", include "#1"}%
%%%\or%\GRAPHICSTYPE=2
%%%  #1%
 \fi
}%
\def\graffile#1#2#3#4{%
%%% \ifnum\GRAPHICSTYPE=\tw@
%%%  %Following if using psfig
%%%  \@ifundefined{psfig}{\input psfig.tex}{}%
%%%  \psfig{file=#1, height=#3, width=#2}%
%%% \else
  %Following for all others
  % JCS - added BOXTHEFRAME, see below
    \bgroup
	   \@inlabelfalse
       \leavevmode
       \@ifundefined{bbl@deactivate}{\def~{\string~}}{\activesoff}%
        \raise -#4 \BOXTHEFRAME{%
           \hbox to #2{\raise #3\hbox to #2{\null #1\hfil}}}%
    \egroup
}%
\def\draftbox#1#2#3#4{%
 \leavevmode\raise -#4 \hbox{%
  \frame{\rlap{\protect\tiny #1}\hbox to #2%
   {\vrule height#3 width\z@ depth\z@\hfil}%
  }%
 }%
}%
\let\nographics=\@msidraft
\newif\ifwasdraft
\def\GRAPHIC#1#2#3#4#5{%
   \ifnum\@msidraft=\@ne\draftbox{#2}{#3}{#4}{#5}%
   \else\graffile{#1}{#3}{#4}{#5}%
   \fi
}
\def\addtoLaTeXparams#1{%
    \edef\LaTeXparams{\LaTeXparams #1}}%
\newif\ifBoxFrame \BoxFramefalse
\newif\ifOverFrame \OverFramefalse
\newif\ifUnderFrame \UnderFramefalse
\def\BOXTHEFRAME#1{%
   \hbox{%
      \ifBoxFrame
         \frame{#1}%
      \else
         {#1}%
      \fi
   }%
}
\def\doFRAMEparams#1{\BoxFramefalse\OverFramefalse\UnderFramefalse\readFRAMEparams#1\end}%
\def\readFRAMEparams#1{%
 \ifx#1\end%
  \let\next=\relax
  \else
  \ifx#1i\dispkind=\z@\fi
  \ifx#1d\dispkind=\@ne\fi
  \ifx#1f\dispkind=\tw@\fi
  \ifx#1t\addtoLaTeXparams{t}\fi
  \ifx#1b\addtoLaTeXparams{b}\fi
  \ifx#1p\addtoLaTeXparams{p}\fi
  \ifx#1h\addtoLaTeXparams{h}\fi
  \ifx#1X\BoxFrametrue\fi
  \ifx#1O\OverFrametrue\fi
  \ifx#1U\UnderFrametrue\fi
  \ifx#1w
    \ifnum\@msidraft=1\wasdrafttrue\else\wasdraftfalse\fi
    \@msidraft=\@ne
  \fi
  \let\next=\readFRAMEparams
  \fi
 \next
 }%
\def\IFRAME#1#2#3#4#5#6{%
      \bgroup
      \let\QCTOptA\empty
      \let\QCTOptB\empty
      \let\QCBOptA\empty
      \let\QCBOptB\empty
      #6%
      \parindent=0pt
      \leftskip=0pt
      \rightskip=0pt
      \setbox0=\hbox{\QCBOptA}%
      \@tempdima=#1\relax
      \ifOverFrame
          % Do this later
          \typeout{This is not implemented yet}%
          \show\HELP
      \else
         \ifdim\wd0>\@tempdima
            \advance\@tempdima by \@tempdima
            \ifdim\wd0 >\@tempdima
               \setbox1 =\vbox{%
                  \unskip\hbox to \@tempdima{\hfill\GRAPHIC{#5}{#4}{#1}{#2}{#3}\hfill}%
                  \unskip\hbox to \@tempdima{\parbox[b]{\@tempdima}{\QCBOptA}}%
               }%
               \wd1=\@tempdima
            \else
               \textwidth=\wd0
               \setbox1 =\vbox{%
                 \noindent\hbox to \wd0{\hfill\GRAPHIC{#5}{#4}{#1}{#2}{#3}\hfill}\\%
                 \noindent\hbox{\QCBOptA}%
               }%
               \wd1=\wd0
            \fi
         \else
            \ifdim\wd0>0pt
              \hsize=\@tempdima
              \setbox1=\vbox{%
                \unskip\GRAPHIC{#5}{#4}{#1}{#2}{0pt}%
                \break
                \unskip\hbox to \@tempdima{\hfill \QCBOptA\hfill}%
              }%
              \wd1=\@tempdima
           \else
              \hsize=\@tempdima
              \setbox1=\vbox{%
                \unskip\GRAPHIC{#5}{#4}{#1}{#2}{0pt}%
              }%
              \wd1=\@tempdima
           \fi
         \fi
         \@tempdimb=\ht1
         %\advance\@tempdimb by \dp1
         \advance\@tempdimb by -#2
         \advance\@tempdimb by #3
         \leavevmode
         \raise -\@tempdimb \hbox{\box1}%
      \fi
      \egroup%
}%
\def\DFRAME#1#2#3#4#5{%
  \vspace\topsep
  \hfil\break
  \bgroup
     \leftskip\@flushglue
	 \rightskip\@flushglue
	 \parindent\z@
	 \parfillskip\z@skip
     \let\QCTOptA\empty
     \let\QCTOptB\empty
     \let\QCBOptA\empty
     \let\QCBOptB\empty
	 \vbox\bgroup
        \ifOverFrame 
           #5\QCTOptA\par
        \fi
        \GRAPHIC{#4}{#3}{#1}{#2}{\z@}%
        \ifUnderFrame 
           \break#5\QCBOptA
        \fi
	 \egroup
  \egroup
  \vspace\topsep
  \break
}%
\def\FFRAME#1#2#3#4#5#6#7{%
 %If float.sty loaded and float option is 'h', change to 'H'  (gp) 1998/09/05
  \@ifundefined{floatstyle}
    {%floatstyle undefined (and float.sty not present), no change
     \begin{figure}[#1]%
    }
    {%floatstyle DEFINED
	 \ifx#1h%Only the h parameter, change to H
      \begin{figure}[H]%
	 \else
      \begin{figure}[#1]%
	 \fi
	}
  \let\QCTOptA\empty
  \let\QCTOptB\empty
  \let\QCBOptA\empty
  \let\QCBOptB\empty
  \ifOverFrame
    #4
    \ifx\QCTOptA\empty
    \else
      \ifx\QCTOptB\empty
        \caption{\QCTOptA}%
      \else
        \caption[\QCTOptB]{\QCTOptA}%
      \fi
    \fi
    \ifUnderFrame\else
      \label{#5}%
    \fi
  \else
    \UnderFrametrue%
  \fi
  \begin{center}\GRAPHIC{#7}{#6}{#2}{#3}{\z@}\end{center}%
  \ifUnderFrame
    #4
    \ifx\QCBOptA\empty
      \caption{}%
    \else
      \ifx\QCBOptB\empty
        \caption{\QCBOptA}%
      \else
        \caption[\QCBOptB]{\QCBOptA}%
      \fi
    \fi
    \label{#5}%
  \fi
  \end{figure}%
 }%
\def\makeactives{
  \catcode`\"=\active
  \catcode`\;=\active
  \catcode`\:=\active
  \catcode`\'=\active
  \catcode`\~=\active
}
   \gdef\activesoff{%
      \def"{\string"}%
      \def;{\string;}%
      \def:{\string:}%
      \def'{\string'}%
      \def~{\string~}%
      %\bbl@deactivate{"}%
      %\bbl@deactivate{;}%
      %\bbl@deactivate{:}%
      %\bbl@deactivate{'}%
    }
\def\FRAME#1#2#3#4#5#6#7#8{%
 \bgroup
 \ifnum\@msidraft=\@ne
   \wasdrafttrue
 \else
   \wasdraftfalse%
 \fi
 \def\LaTeXparams{}%
 \dispkind=\z@
 \def\LaTeXparams{}%
 \doFRAMEparams{#1}%
 \ifnum\dispkind=\z@\IFRAME{#2}{#3}{#4}{#7}{#8}{#5}\else
  \ifnum\dispkind=\@ne\DFRAME{#2}{#3}{#7}{#8}{#5}\else
   \ifnum\dispkind=\tw@
    \edef\@tempa{\noexpand\FFRAME{\LaTeXparams}}%
    \@tempa{#2}{#3}{#5}{#6}{#7}{#8}%
    \fi
   \fi
  \fi
  \ifwasdraft\@msidraft=1\else\@msidraft=0\fi{}%
  \egroup
 }%
\def\TEXUX#1{"texux"}
\def\limfunc#1{\mathop{\rm #1}}%
\def\func#1{\mathop{\rm #1}\nolimits}%
\def\unit#1{\mathord{\thinspace\rm #1}}%
\long\def\QQQ#1#2{%
     \long\expandafter\def\csname#1\endcsname{#2}}%
\long\def\QQA#1#2{}%
\def\QTR#1#2{{\csname#1\endcsname {#2}}}%
\def\EXPAND#1[#2]#3{}%
\def\NOEXPAND#1[#2]#3{}%
\def\LaTeXparent#1{}%
\def\ChildStyles#1{}%
\def\ChildDefaults#1{}%
\def\QTagDef#1#2#3{}%
  \providecommand{\UNICODE}[2][]{\protect\rule{.1in}{.1in}}
  \providecommand{\U}[1]{\protect\rule{.1in}{.1in}}
\def\QQfnmark#1{\footnotemark}
 \def\abstract{%
  \if@twocolumn
   \section*{Abstract (Not appropriate in this style!)}%
   \else \small 
   \begin{center}{\bf Abstract\vspace{-.5em}\vspace{\z@}}\end{center}%
   \quotation 
   \fi
  }%
   \def\registered{\relax\ifmmode{}\r@gistered
                    \else$\m@th\r@gistered$\fi}%
 \def\r@gistered{^{\ooalign
  {\hfil\raise.07ex\hbox{$\scriptstyle\rm\text{R}$}\hfil\crcr
  \mathhexbox20D}}}}{}%
\newdimen\theight
\def\newfmtname{LaTeX2e}
  \DeclareOldFontCommand{\rm}{\normalfont\rmfamily}{\mathrm}
  \DeclareOldFontCommand{\sf}{\normalfont\sffamily}{\mathsf}
  \DeclareOldFontCommand{\tt}{\normalfont\ttfamily}{\mathtt}
  \DeclareOldFontCommand{\bf}{\normalfont\bfseries}{\mathbf}
  \DeclareOldFontCommand{\it}{\normalfont\itshape}{\mathit}
  \DeclareOldFontCommand{\sl}{\normalfont\slshape}{\@nomath\sl}
  \DeclareOldFontCommand{\sc}{\normalfont\scshape}{\@nomath\sc}
\def\alpha{{\Greekmath 010B}}%
\def\beta{{\Greekmath 010C}}%
\def\gamma{{\Greekmath 010D}}%
\def\delta{{\Greekmath 010E}}%
\def\epsilon{{\Greekmath 010F}}%
\def\zeta{{\Greekmath 0110}}%
\def\eta{{\Greekmath 0111}}%
\def\theta{{\Greekmath 0112}}%
\def\iota{{\Greekmath 0113}}%
\def\kappa{{\Greekmath 0114}}%
\def\lambda{{\Greekmath 0115}}%
\def\mu{{\Greekmath 0116}}%
\def\nu{{\Greekmath 0117}}%
\def\xi{{\Greekmath 0118}}%
\def\pi{{\Greekmath 0119}}%
\def\rho{{\Greekmath 011A}}%
\def\sigma{{\Greekmath 011B}}%
\def\tau{{\Greekmath 011C}}%
\def\upsilon{{\Greekmath 011D}}%
\def\phi{{\Greekmath 011E}}%
\def\chi{{\Greekmath 011F}}%
\def\psi{{\Greekmath 0120}}%
\def\omega{{\Greekmath 0121}}%
\def\varepsilon{{\Greekmath 0122}}%
\def\vartheta{{\Greekmath 0123}}%
\def\varpi{{\Greekmath 0124}}%
\def\varrho{{\Greekmath 0125}}%
\def\varsigma{{\Greekmath 0126}}%
\def\varphi{{\Greekmath 0127}}%
\def\nabla{{\Greekmath 0272}}
\def\FindBoldGroup{%
   {\setbox0=\hbox{$\mathbf{x\global\edef\theboldgroup{\the\mathgroup}}$}}%
}
\def\Greekmath#1#2#3#4{%
    \if@compatibility
        \ifnum\mathgroup=\symbold
           \mathchoice{\mbox{\boldmath$\displaystyle\mathchar"#1#2#3#4$}}%
                      {\mbox{\boldmath$\textstyle\mathchar"#1#2#3#4$}}%
                      {\mbox{\boldmath$\scriptstyle\mathchar"#1#2#3#4$}}%
                      {\mbox{\boldmath$\scriptscriptstyle\mathchar"#1#2#3#4$}}%
        \else
           \mathchar"#1#2#3#4% 
        \fi 
    \else 
        \FindBoldGroup
        \ifnum\mathgroup=\theboldgroup % For 2e
           \mathchoice{\mbox{\boldmath$\displaystyle\mathchar"#1#2#3#4$}}%
                      {\mbox{\boldmath$\textstyle\mathchar"#1#2#3#4$}}%
                      {\mbox{\boldmath$\scriptstyle\mathchar"#1#2#3#4$}}%
                      {\mbox{\boldmath$\scriptscriptstyle\mathchar"#1#2#3#4$}}%
        \else
           \mathchar"#1#2#3#4% 
        \fi     	    
	  \fi}
\newif\ifGreekBold  \GreekBoldfalse
\let\SAVEPBF=\pbf
\def\pbf{\GreekBoldtrue\SAVEPBF}%
  \newcounter{equationnumber}  
  \def\mathletters{%
     \addtocounter{equation}{1}
     \edef\@currentlabel{\theequation}%
     \setcounter{equationnumber}{\c@equation}
     \setcounter{equation}{0}%
     \edef\theequation{\@currentlabel\noexpand\alph{equation}}%
  }
    \def\BibTeX{{\rm B\kern-.05em{\sc i\kern-.025em b}\kern-.08em
                 T\kern-.1667em\lower.7ex\hbox{E}\kern-.125emX}}}{}%
\def\AmS{{\protect\usefont{OMS}{cmsy}{m}{n}%
                A\kern-.1667em\lower.5ex\hbox{M}\kern-.125emS}}}{}%
\def\@@eqncr{\let\@tempa\relax
    \ifcase\@eqcnt \def\@tempa{& & &}\or \def\@tempa{& &}%
      \else \def\@tempa{&}\fi
     \@tempa
     \if@eqnsw
        \iftag@
           \@taggnum
        \else
           \@eqnnum\stepcounter{equation}%
        \fi
     \fi
     \global\tag@false
     \global\@eqnswtrue
     \global\@eqcnt\z@\cr}
\def\TCItag{\@ifnextchar*{\@TCItagstar}{\@TCItag}}
\def\@TCItag#1{%
    \global\tag@true
    \global\def\@taggnum{(#1)}%
    \global\def\@currentlabel{#1}}
\def\@TCItagstar*#1{%
    \global\tag@true
    \global\def\@taggnum{#1}%
    \global\def\@currentlabel{#1}}
\def\tint{\msi@int\textstyle\int}%
\def\tiint{\msi@int\textstyle\iint}%
\def\tiiint{\msi@int\textstyle\iiint}%
\def\tiiiint{\msi@int\textstyle\iiiint}%
\def\tidotsint{\msi@int\textstyle\idotsint}%
\def\toint{\msi@int\textstyle\oint}%
\newtoks\temptoksa
\newtoks\temptoksb
\newtoks\temptoksc
\def\msi@int#1#2{%
 \def\@temp{{#1#2\the\temptoksc_{\the\temptoksa}^{\the\temptoksb}}}%   
 \futurelet\@nextcs
 \@int
}
\def\@int{%
   \ifx\@nextcs\limits
      \typeout{Found limits}%
      \temptoksc={\limits}%
	  \let\@next\@intgobble%
   \else\ifx\@nextcs\nolimits
      \typeout{Found nolimits}%
      \temptoksc={\nolimits}%
	  \let\@next\@intgobble%
   \else
      \typeout{Did not find limits or no limits}%
      \temptoksc={}%
      \let\@next\msi@limits%
   \fi\fi
   \@next   
}%
\def\@intgobble#1{%
   \typeout{arg is #1}%
   \msi@limits
}
\def\msi@limits{%
   \temptoksa={}%
   \temptoksb={}%
   \@ifnextchar_{\@limitsa}{\@limitsb}%
}
\def\@limitsa_#1{%
   \temptoksa={#1}%
   \@ifnextchar^{\@limitsc}{\@temp}%
}
\def\@limitsb{%
   \@ifnextchar^{\@limitsc}{\@temp}%
}
\def\@limitsc^#1{%
   \temptoksb={#1}%
   \@ifnextchar_{\@limitsd}{\@temp}%   
}
\def\@limitsd_#1{%
   \temptoksa={#1}%
   \@temp
}
\def\dint{\msi@int\displaystyle\int}%
\def\diint{\msi@int\displaystyle\iint}%
\def\diiint{\msi@int\displaystyle\iiint}%
\def\diiiint{\msi@int\displaystyle\iiiint}%
\def\didotsint{\msi@int\displaystyle\idotsint}%
\def\doint{\msi@int\displaystyle\oint}%
\def\ExitTCILatex{\makeatother }
\if@compatibility\message{amsmath already loaded}\fi\aftergroup\ExitTCILatex}
\if@compatibility\message{amstex already loaded}\fi\aftergroup\ExitTCILatex}
\if@compatibility\message{amsgen already loaded}\fi\aftergroup\ExitTCILatex}
\let\DOTSI\relax
\def\RIfM@{\relax\ifmmode}%
\def\FN@{\futurelet\next}%
\def\iint{\DOTSI\intno@\tw@\FN@\ints@}%
\def\iiint{\DOTSI\intno@\thr@@\FN@\ints@}%
\def\iiiint{\DOTSI\intno@4 \FN@\ints@}%
\def\idotsint{\DOTSI\intno@\z@\FN@\ints@}%
\def\ints@{\findlimits@\ints@@}%
\newif\iflimtoken@
\newif\iflimits@
\def\findlimits@{\limtoken@true\ifx\next\limits\limits@true
 \else\ifx\next\nolimits\limits@false\else
 \limtoken@false\ifx\ilimits@\nolimits\limits@false\else
 \ifinner\limits@false\else\limits@true\fi\fi\fi\fi}%
\def\multint@{\int\ifnum\intno@=\z@\intdots@                          %1
 \else\intkern@\fi                                                    %2
 \ifnum\intno@>\tw@\int\intkern@\fi                                   %3
 \ifnum\intno@>\thr@@\int\intkern@\fi                                 %4
 \int}%                                                               %5
\def\multintlimits@{\intop\ifnum\intno@=\z@\intdots@\else\intkern@\fi
 \ifnum\intno@>\tw@\intop\intkern@\fi
 \ifnum\intno@>\thr@@\intop\intkern@\fi\intop}%
\def\intic@{%
    \mathchoice{\hskip.5em}{\hskip.4em}{\hskip.4em}{\hskip.4em}}%
\def\negintic@{\mathchoice
 {\hskip-.5em}{\hskip-.4em}{\hskip-.4em}{\hskip-.4em}}%
\def\ints@@{\iflimtoken@                                              %1
 \def\ints@@@{\iflimits@\negintic@
   \mathop{\intic@\multintlimits@}\limits                             %2
  \else\multint@\nolimits\fi                                          %3
  \eat@}%                                                             %4
 \else                                                                %5
 \def\ints@@@{\iflimits@\negintic@
  \mathop{\intic@\multintlimits@}\limits\else
  \multint@\nolimits\fi}\fi\ints@@@}%
\def\intkern@{\mathchoice{\!\!\!}{\!\!}{\!\!}{\!\!}}%
\def\plaincdots@{\mathinner{\cdotp\cdotp\cdotp}}%
\def\intdots@{\mathchoice{\plaincdots@}%
 {{\cdotp}\mkern1.5mu{\cdotp}\mkern1.5mu{\cdotp}}%
 {{\cdotp}\mkern1mu{\cdotp}\mkern1mu{\cdotp}}%
 {{\cdotp}\mkern1mu{\cdotp}\mkern1mu{\cdotp}}}%
\def\RIfM@{\relax\protect\ifmmode}
\def\text{\RIfM@\expandafter\text@\else\expandafter\mbox\fi}
\let\nfss@text\text
\def\text@#1{\mathchoice
   {\textdef@\displaystyle\f@size{#1}}%
   {\textdef@\textstyle\tf@size{\firstchoice@false #1}}%
   {\textdef@\textstyle\sf@size{\firstchoice@false #1}}%
   {\textdef@\textstyle \ssf@size{\firstchoice@false #1}}%
   \glb@settings}
\def\textdef@#1#2#3{\hbox{{%
                    \everymath{#1}%
                    \let\f@size#2\selectfont
                    #3}}}
\newif\iffirstchoice@
\def\Let@{\relax\iffalse{\fi\let\\=\cr\iffalse}\fi}%
\def\vspace@{\def\vspace##1{\crcr\noalign{\vskip##1\relax}}}%
\def\multilimits@{\bgroup\vspace@\Let@
 \baselineskip\fontdimen10 \scriptfont\tw@
 \advance\baselineskip\fontdimen12 \scriptfont\tw@
 \lineskip\thr@@\fontdimen8 \scriptfont\thr@@
 \lineskiplimit\lineskip
 \vbox\bgroup\ialign\bgroup\hfil$\m@th\scriptstyle{##}$\hfil\crcr}%
\def\Sb{_\multilimits@}%
\def\endSb{\crcr\egroup\egroup\egroup}%
\def\Sp{^\multilimits@}%
\newdimen\ex@
\def\rightarrowfill@#1{$#1\m@th\mathord-\mkern-6mu\cleaders
 \hbox{$#1\mkern-2mu\mathord-\mkern-2mu$}\hfill
 \mkern-6mu\mathord\rightarrow$}%
\def\leftarrowfill@#1{$#1\m@th\mathord\leftarrow\mkern-6mu\cleaders
 \hbox{$#1\mkern-2mu\mathord-\mkern-2mu$}\hfill\mkern-6mu\mathord-$}%
\def\leftrightarrowfill@#1{$#1\m@th\mathord\leftarrow
\mkern-6mu\cleaders
 \hbox{$#1\mkern-2mu\mathord-\mkern-2mu$}\hfill
 \mkern-6mu\mathord\rightarrow$}%
\def\overrightarrow{\mathpalette\overrightarrow@}%
\def\overrightarrow@#1#2{\vbox{\ialign{##\crcr\rightarrowfill@#1\crcr
 \noalign{\kern-\ex@\nointerlineskip}$\m@th\hfil#1#2\hfil$\crcr}}}%
\def\overleftarrow{\mathpalette\overleftarrow@}%
\def\overleftarrow@#1#2{\vbox{\ialign{##\crcr\leftarrowfill@#1\crcr
 \noalign{\kern-\ex@\nointerlineskip}$\m@th\hfil#1#2\hfil$\crcr}}}%
\def\overleftrightarrow{\mathpalette\overleftrightarrow@}%
\def\overleftrightarrow@#1#2{\vbox{\ialign{##\crcr
   \leftrightarrowfill@#1\crcr
 \noalign{\kern-\ex@\nointerlineskip}$\m@th\hfil#1#2\hfil$\crcr}}}%
\def\underrightarrow{\mathpalette\underrightarrow@}%
\def\underrightarrow@#1#2{\vtop{\ialign{##\crcr$\m@th\hfil#1#2\hfil
  $\crcr\noalign{\nointerlineskip}\rightarrowfill@#1\crcr}}}%
\def\underleftarrow{\mathpalette\underleftarrow@}%
\def\underleftarrow@#1#2{\vtop{\ialign{##\crcr$\m@th\hfil#1#2\hfil
  $\crcr\noalign{\nointerlineskip}\leftarrowfill@#1\crcr}}}%
\def\underleftrightarrow{\mathpalette\underleftrightarrow@}%
\def\underleftrightarrow@#1#2{\vtop{\ialign{##\crcr$\m@th
  \hfil#1#2\hfil$\crcr
 \noalign{\nointerlineskip}\leftrightarrowfill@#1\crcr}}}%
\def\qopnamewl@#1{\mathop{\operator@font#1}\nlimits@}
\let\nlimits@\displaylimits
\def\setboxz@h{\setbox\z@\hbox}
\def\varlim@#1#2{\mathop{\vtop{\ialign{##\crcr
 \hfil$#1\m@th\operator@font lim$\hfil\crcr
 \noalign{\nointerlineskip}#2#1\crcr
 \noalign{\nointerlineskip\kern-\ex@}\crcr}}}}
 \def\rightarrowfill@#1{\m@th\setboxz@h{$#1-$}\ht\z@\z@
  $#1\copy\z@\mkern-6mu\cleaders
  \hbox{$#1\mkern-2mu\box\z@\mkern-2mu$}\hfill
  \mkern-6mu\mathord\rightarrow$}
\def\leftarrowfill@#1{\m@th\setboxz@h{$#1-$}\ht\z@\z@
  $#1\mathord\leftarrow\mkern-6mu\cleaders
  \hbox{$#1\mkern-2mu\copy\z@\mkern-2mu$}\hfill
  \mkern-6mu\box\z@$}
\def\projlim{\qopnamewl@{proj\,lim}}
\def\injlim{\qopnamewl@{inj\,lim}}
\def\varinjlim{\mathpalette\varlim@\rightarrowfill@}
\def\varprojlim{\mathpalette\varlim@\leftarrowfill@}
\def\varliminf{\mathpalette\varliminf@{}}
\def\varliminf@#1{\mathop{\underline{\vrule\@depth.2\ex@\@width\z@
   \hbox{$#1\m@th\operator@font lim$}}}}
\def\varlimsup{\mathpalette\varlimsup@{}}
\def\varlimsup@#1{\mathop{\overline
  {\hbox{$#1\m@th\operator@font lim$}}}}
\def\align{\@verbatim \frenchspacing\@vobeyspaces \@alignverbatim
You are using the "align" environment in a style in which it is not defined.}
\let\csname endalign*\endcsname =\endtrivlist
\def\alignat{\@verbatim \frenchspacing\@vobeyspaces \@alignatverbatim
You are using the "alignat" environment in a style in which it is not defined.}
\let\csname endalignat*\endcsname =\endtrivlist
\def\xalignat{\@verbatim \frenchspacing\@vobeyspaces \@xalignatverbatim
You are using the "xalignat" environment in a style in which it is not defined.}
\let\csname endxalignat*\endcsname =\endtrivlist
\def\gather{\@verbatim \frenchspacing\@vobeyspaces \@gatherverbatim
You are using the "gather" environment in a style in which it is not defined.}
\let\csname endgather*\endcsname =\endtrivlist
\def\multiline{\@verbatim \frenchspacing\@vobeyspaces \@multilineverbatim
You are using the "multiline" environment in a style in which it is not defined.}
\let\csname endmultiline*\endcsname =\endtrivlist
\def\arrax{\@verbatim \frenchspacing\@vobeyspaces \@arraxverbatim
You are using a type of "array" construct that is only allowed in AmS-LaTeX.}
\def\tabulax{\@verbatim \frenchspacing\@vobeyspaces \@tabulaxverbatim
You are using a type of "tabular" construct that is only allowed in AmS-LaTeX.}
\let\csname endarrax*\endcsname =\endtrivlist
\let\csname endtabulax*\endcsname =\endtrivlist
 \def\endequation{%
     \ifmmode\ifinner % FLEQN hack
      \iftag@
        \addtocounter{equation}{-1} % undo the increment made in the begin part
        $\hfil
           \displaywidth\linewidth\@taggnum\egroup \endtrivlist
        \global\tag@false
        \global\@ignoretrue   
      \else
        $\hfil
           \displaywidth\linewidth\@eqnnum\egroup \endtrivlist
        \global\tag@false
        \global\@ignoretrue 
      \fi
     \else   
      \iftag@
        \addtocounter{equation}{-1} % undo the increment made in the begin part
        \eqno \hbox{\@taggnum}
        \global\tag@false%
        $$\global\@ignoretrue
      \else
        \eqno \hbox{\@eqnnum}% $$ BRACE MATCHING HACK
        $$\global\@ignoretrue
      \fi
     \fi\fi
 } 
 \newif\iftag@ \tag@false
 \def\TCItag{\@ifnextchar*{\@TCItagstar}{\@TCItag}}
 \def\@TCItag#1{%
     \global\tag@true
     \global\def\@taggnum{(#1)}%
     \global\def\@currentlabel{#1}}
 \def\@TCItagstar*#1{%
     \global\tag@true
     \global\def\@taggnum{#1}%
     \global\def\@currentlabel{#1}}
     \def\tag{\@ifnextchar*{\@tagstar}{\@tag}}
     \def\@tag#1{%
         \global\tag@true
         \global\def\@taggnum{(#1)}}
     \def\@tagstar*#1{%
         \global\tag@true
         \global\def\@taggnum{#1}}
\begin{document}

\title{Space-time fractional Zener wave equation }
\author{ Teodor M. Atanackovic\thanks{%
Department of Mechanics, Faculty of Technical Sciences, University of Novi
Sad, Trg D. Obradovica 6, 21000 Novi Sad, Serbia, atanackovic@uns.ac.rs},
Marko Janev\thanks{%
Mathematical Institute, Serbian Academy of Arts and Sciences, Kneza Mihaila
36, 11000 Belgrade, Serbia, janev.marko@gmail.com}, \\
%EndAName
Ljubica Oparnica\thanks{%
Faculty of Education in Sombor, University of Novi Sad, Podgori\v{c}ka 4,
25000 Sombor, Serbia, ljubica.oparnica@gmail.com}, Stevan Pilipovic\thanks{%
Department of Mathematics, Faculty of Natural Sciences and Mathematics,
University of Novi Sad, Trg D. Obradovica 4, 21000 Novi Sad, Serbia,
stevan.pilipovic@dmi.uns.ac.rs}, Dusan Zorica\thanks{%
Mathematical Institute, Serbian Academy of Arts and Sciences, Kneza Mihaila
36, 11000 Belgrade, Serbia, dusan\textunderscore zorica@mi.sanu.ac.rs}}
\maketitle

\begin{abstract}
Space-time fractional Zener wave equation, describing viscoelastic materials
obeying the time-fractional Zener model and the space-fractional strain
measure, is derived and analyzed. This model includes waves with finite
speed, as well as non-propagating disturbances. The existence and the
uniqueness of the solution to the generalized Cauchy problem are proved.
Special cases are investigated and numerical examples are presented.

Key words: fractional Zener model, fractional strain measure, Laplace and
Fourier transforms, Cauchy problem, generalized solution

%Mathematics Subject Classification (2000)$\clubsuit$\ \texttt{2010}\ $%
%\clubsuit$: Primary: 26A33; Secondary: 46F12, 74D05, 74J05
\end{abstract}

\section{Introduction \label{sec:intro}}

The aim of this study is a class of generalized wave equation. Wave equation
can be generalized within the theory of fractional calculus by replacing the
second order derivative (space and/or time) with the fractional ones, as
done in \cite%
{APZ-1,APZ-2,Hanyga-stf,Kochubei08,MainardiPagniniGorenflo,MaPaMuGo,Nab}.
Space-time fractional Zener wave equation represents a generalization of the
classical wave equation obtained as a system consisting of the equation of
motion of the deformable (one-dimensional) body, the time-fractional Zener
constitutive equation and the space-fractional strain measure. Our
generalization is done by the fractionalization in both space and time
variable on the ground of the physically acceptable concepts. More details
on the formulation and the mechanical background will be given in this
section, which finishes with the remark related to the analysis of our
generalization of the wave equation.

In Section \ref{sec:fund sol} we show the existence and uniqueness of
solution to the space-time fractional Zener wave equation (\ref{eq:fzwe}), (%
\ref{eq:ic}), (\ref{eq:bc}). For this purpose we use the Fourier and Laplace
transforms in the spaces of distributions, simplifying the procedure, in a
way that we have to prove the absolute convergence of certain double
integrals. The analysis presented in Section \ref{eu} concerning the
properties of solution to the space-time fractional Zener wave equation
implies that the solution kernel is%
\begin{equation*}
P\left( x,t\right) =I\left( x,t\right) -\left( \frac{\partial }{\partial t}%
J_{1}\left( x,t\right) +\frac{\partial ^{2}}{\partial t^{2}}J_{2}\left(
x,t\right) \right) \mathrm{e}^{s_{0}t},\;\;x\in
%TCIMACRO{\U{211d} }%
%BeginExpansion
\mathbb{R}
%EndExpansion
,\;t>0,
\end{equation*}%
where $I,$ $J_{1}$ and $J_{2}$ are continuous bounded function with respect
to the space variable $x$ and continuous exponentially bounded functions
with respect to the time variable $t,$ see Theorem \ref{th:glavna}. On the
other hand, in Section \ref{sec:calc} we show, by the regularization and
quite different estimates in comparison to those used in Section \ref{eu},
that the solution to (\ref{eq:fzwe}), (\ref{eq:ic}), (\ref{eq:bc}) is given
by a distributional limit of a net of approximated solutions, which are
continuous with respect to $x\in
%TCIMACRO{\U{211d} }%
%BeginExpansion
\mathbb{R}
%EndExpansion
,$ $t>0,$ bounded with respect to $x\in
%TCIMACRO{\U{211d} }%
%BeginExpansion
\mathbb{R}
%EndExpansion
$ and exponentially bounded with respect to $t>0.$ Results of Section \ref%
{sec:fund sol} are justified in Section \ref{cases} by discussing the
influence of parameters $\alpha $ and $\beta $ (orders of the time and space
fractional derivatives) on the solution to (\ref{eq:fzwe}), (\ref{eq:ic}), (%
\ref{eq:bc}) and in Section \ref{sec:ex} by the numerical examples.
Mathematical background is given in Appendix \ref{sec:math prelim}.

\subsection{Model}

Recall, the classical wave equation describes the waves that occur in
elastic medium. It is obtained from the equations of the deformable body,
see \cite{a-g}. The wave equation can be written in the form of a system
which consists of three equations: equation of motion, constitutive equation
and strain measure. Unknown functions depending on time, $t>0,$ and space, $%
x\in \mathbb{R},$ variables are: displacement $u$, stress $\sigma $ and
strain $\varepsilon $. We consider an infinite viscoelastic rod
(one-dimensional body), positioned along $x$-axis, that is not under
influence of body forces. Then, the equation of motion reads
\begin{equation}
{\partial _{x}}\sigma (x,t)=\rho \,{\partial _{t}^{2}}u(x,t),\;\;x\in
\mathbb{R},\;t>0,  \label{eq:system1-d}
\end{equation}%
where $\rho >0$ denotes the (constant) density of the rod. The constitutive
equation gives the relation between stress and strain, and in the case of
elastic media it is the Hooke law. Since we consider waves occurring in
viscoelastic media, we chose the constitutive equation to be the
time-fractional Zener model%
\begin{equation}
\sigma (x,t)+\tau _{\sigma }\,{}_{0}^{C}\mathrm{D}_{t}^{\alpha }\sigma
(x,t)=E(\varepsilon (x,t)+\tau _{\varepsilon }\,{}_{0}^{C}\mathrm{D}%
_{t}^{\alpha }\varepsilon (x,t)),\;\;x\in \mathbb{R},\;t>0,
\label{eq:system2-d}
\end{equation}%
where $E$ is the generalized Young modulus (measured in $\frac{\unit{Pa}}{%
\unit{m}^{1-\beta }}$), $\tau _{\sigma }$ and $\tau _{\varepsilon }$ are
generalized relaxation times (measured in $\unit{s}^{\alpha }$) with
(thermodynamical) restriction $0<\tau _{\sigma }<\tau _{\varepsilon }$. All
three parameters are assumed to be constant. The operator ${}_{0}^{C}\mathrm{%
D}_{t}^{\alpha }$ denotes the left Caputo operator of fractional
differentiation of order $\alpha \in \left[ 0,1\right) ,$ see Appendix \ref%
{sec:math prelim}. For $\alpha =0$, the constitutive equation (\ref%
{eq:system2-d}) reduces to the Hooke law%
\begin{equation*}
\sigma =E_{r}\varepsilon ,\;\;\text{with}\;\;E_{r}=E\frac{1+\tau
_{\varepsilon }}{1+\tau _{\sigma }}.
\end{equation*}%
For $\alpha =1$, the constitutive equation (\ref{eq:system2-d}) reduces to
the classical Zener model. For more details on fractional derivatives see
\cite{Pod,SKM}. We refer to \cite{R-S-2010} for a review on the fractional
models in viscoelasticity and to \cite{AKOZ} for a systematic analysis of
the thermodynamical restrictions on parameters in such models. The strain
measure gives connection between strain and displacement. In the classical
set up, strain measure describes local deformations and it reads: $%
\varepsilon =\partial _{x}u$. Since we consider non-local effects in
material, we use the fractional model of strain measure
\begin{equation}
\varepsilon (x,t)=\mathcal{E}_{x}^{\beta }u(x,t),\;\;x\in \mathbb{R},\;t>0,
\label{eq:system3-d}
\end{equation}%
where $\mathcal{E}_{x}^{\beta }$ denotes the symmetrized Caputo fractional
derivative of order $\beta \in \left[ 0,1\right) ,$ see Appendix \ref%
{sec:math prelim}. For $\beta =1$ we obtain the classical strain measure.
Regarding the fractionalization of the strain measure, we follow the
approach presented in \cite{A-S-09}, where the symmetrized fractional
derivative is introduced in order to describe the non-local effects of the
material. Note that in \cite{AKOZ-1}, the same type of the fractional
derivative is used in the framework of the heat conduction problem of the
space-time fractional Cattaneo type equation.

One may also treat the non-locality in viscoelastic media by the different
approach. Namely, contrary to (\ref{eq:system3-d}), one may retain the
classical strain measure and introduce the non-locality in the constitutive
equation. In the classical setting it was done by Eringen, \cite{eringen}.
In the framework of the fractional calculus this approach is followed in
\cite{CCS,CZAS,DPFPSZ,DPFZ,DPZ}. The wave equation, obtained from a system
consisting of the equation of motion, fractional Eringen-type constitutive
equation and classical strain measure, is studied in \cite{CDPZ,SCC}.

The initial conditions corresponding to system (\ref{eq:system1-d}) - (\ref%
{eq:system3-d}) are%
\begin{equation}
u(x,0)=u_{0}(x),\;\;\;\;{\partial _{t}}u(x,0)=v_{0}(x),\;\;\;\;\sigma
(x,0)=0,\;\;\;\;\varepsilon (x,0)=0,\;\;x\in \mathbb{R},  \label{eq:ic-d}
\end{equation}%
where $u_{0}$ and $v_{0}$ are initial displacement and velocity, while the
boundary conditions are%
\begin{equation}
\lim_{x\rightarrow \pm \infty }u(x,t)=0,\;\;\;\;\lim_{x\rightarrow \pm
\infty }\sigma (x,t)=0,\;\;t>0.  \label{eq:bc-d}
\end{equation}%
Note that boundary conditions (\ref{eq:bc-d}) are the natural choice for the
case of the unbounded domain, while in the case of the bounded domain there
can be a large variety of different boundary conditions depending on the
type of problem one faces with. In the case of the local, time-fractional
wave equation on a bounded domain we refer to \cite{AKOZ,APZ-5,R-S-2010} and
references therein.

\subsection{System}

Introducing the dimensionless quantities%
\begin{gather*}
\bar{x}=x\left( \left( \tau _{\varepsilon }\right) ^{\frac{2}{\alpha }}\frac{%
\rho }{E}\right) ^{-\frac{1}{1+\beta }},\;\;\;\;\bar{t}=t\left( \tau
_{\varepsilon }\right) ^{-\frac{1}{\alpha }},\;\;\;\;\bar{u}=u\left( \left(
\tau _{\varepsilon }\right) ^{\frac{2}{\alpha }}\frac{\rho }{E}\right) ^{-%
\frac{1}{1+\beta }}, \\
\bar{\sigma}=\frac{\sigma }{E}\left( \left( \tau _{\varepsilon }\right) ^{%
\frac{2}{\alpha }}\frac{\rho }{E}\right) ^{-\frac{1-\beta }{1+\beta }%
},\;\;\;\;\bar{\varepsilon}=\varepsilon \left( \left( \tau _{\varepsilon
}\right) ^{\frac{2}{\alpha }}\frac{\rho }{E}\right) ^{-\frac{1-\beta }{%
1+\beta }},\;\;\;\;\tau =\frac{\tau _{\sigma }}{\tau _{\varepsilon }}, \\
\bar{u}_{0}=u_{0}\left( \left( \tau _{\varepsilon }\right) ^{\frac{2}{\alpha
}}\frac{\rho }{E}\right) ^{-\frac{1}{1+\beta }},\;\;\;\;\bar{v}%
_{0}=v_{0}\left( \tau _{\varepsilon }\right) ^{\frac{1}{\alpha }}\left(
\left( \tau _{\varepsilon }\right) ^{\frac{2}{\alpha }}\frac{\rho }{E}%
\right) ^{-\frac{1}{1+\beta }}
\end{gather*}%
in (\ref{eq:system1-d}) - (\ref{eq:system3-d}) and omitting bar we obtain
\begin{gather}
{\partial _{x}}\sigma (x,t)={\partial _{t}^{2}}u(x,t),\;\;x\in \mathbb{R}%
,\;t>0,  \label{eq:system1} \\
\sigma (x,t)+\tau \,{}_{0}^{C}\mathrm{D}_{t}^{\alpha }\sigma
(x,t)=\varepsilon (x,t)+{}_{0}^{C}\mathrm{D}_{t}^{\alpha }\varepsilon
(x,t),\;\;x\in \mathbb{R},\;t>0,  \label{eq:system2} \\
\varepsilon (x,t)=\mathcal{E}_{x}^{\beta }u(x,t),\;\;x\in \mathbb{R},\;t>0.
\label{eq:system3}
\end{gather}

System (\ref{eq:system1}) - (\ref{eq:system3}) can be reduced to the
space-time fractional Zener wave equation
\begin{equation}
\partial _{t}^{2}u(x,t)=L_{t}^{\alpha }\partial _{x}\mathcal{E}_{x}^{\beta
}u(x,t),\text{\ \ }x\in \mathbb{R},\;t>0,  \label{eq:fzwe}
\end{equation}%
where $L_{t}^{\alpha }$ is a linear operator (of convolution type) given by
\begin{equation}
L_{t}^{\alpha }=\mathcal{L}^{-1}\left[ \frac{1+s^{\alpha }}{1+\tau s^{\alpha
}}\right] \ast _{t}=\left( \frac{1}{\tau }\delta (t)+\left( \frac{1}{\tau }%
-1\right) e_{\alpha }^{\prime }(t)\right) \ast _{t},\;\;t>0,  \label{eq:L}
\end{equation}%
and $\mathcal{L}^{-1}$ denotes the inverse Laplace transform, see Appendix %
\ref{sec:math prelim}. The dimensionless quantities give that initial and
boundary conditions, (\ref{eq:bc-d}) and (\ref{eq:ic-d}), for the space-time
fractional Zener wave equation (\ref{eq:fzwe}) again become%
\begin{gather}
u(x,0)=u_{0}(x),\;\;\;\;{\partial _{t}}u(x,0)=v_{0}(x),\;\;\;\;\sigma
(x,0)=0,\;\;\;\;\varepsilon (x,0)=0,\;\;x\in \mathbb{R},  \label{eq:ic} \\
\lim_{x\rightarrow \pm \infty }u(x,t)=0,\;\;\;\;\lim_{x\rightarrow \pm
\infty }\sigma (x,t)=0,\;\;t>0.  \label{eq:bc}
\end{gather}

The procedure of obtaining (\ref{eq:fzwe}) is as follows. Applying the
Laplace transform to (\ref{eq:system2}) with respect to time variable $t$,
one obtains
\begin{equation*}
(1+\tau s^{\alpha })\tilde{\sigma}(x,s)=(1+s^{\alpha })\tilde{\varepsilon}%
(x,s),\;\;x\in
%TCIMACRO{\U{211d} }%
%BeginExpansion
\mathbb{R}
%EndExpansion
,\;\func{Re}s>0.
\end{equation*}%
The inverse Laplace transform, since $\mathcal{L}^{-1}\left[ \frac{%
1+s^{\alpha }}{1+\tau s^{\alpha }}\right] $ is well-defined element in $%
\mathcal{S}_{+}^{\prime }$ (see \cite{Oparnica02}), gives
\begin{equation}
\sigma =\mathcal{L}^{-1}\left[ \frac{1+s^{\alpha }}{1+\tau s^{\alpha }}%
\right] \ast _{t}\varepsilon .  \label{sigma-epsilon}
\end{equation}%
Setting $L_{t}^{\alpha }=\mathcal{L}^{-1}\left[ \frac{1+s^{\alpha }}{1+\tau
s^{\alpha }}\right] \ast _{t},$ inserting $\varepsilon $, given by (\ref%
{eq:system3}), into (\ref{sigma-epsilon}) and then inserting obtained $%
\sigma $ into (\ref{eq:system1}), we obtain (\ref{eq:fzwe}). Note that $%
L_{t}^{\alpha }=\mathcal{L}^{-1}\left[ \frac{1+s^{\alpha }}{1+\tau s^{\alpha
}}\right] \ast _{t}$ can be explicitly expressed via the Mittag-Leffler
function. Recall, for the Mittag-Leffler function $e_{\alpha }$, defined by
\begin{equation}
e_{\alpha }(t)=E_{\alpha }\left( -\frac{t^{\alpha }}{\tau }\right)
,\;\;t>0,\;\alpha \in \left( 0,1\right) ,  \label{mlf}
\end{equation}%
where $E_{\alpha }\left( z\right) =\sum_{k=0}^{\infty }\frac{z^{k}}{\Gamma
(\alpha k+1)},$ $z\in
%TCIMACRO{\U{2102} }%
%BeginExpansion
\mathbb{C}
%EndExpansion
,$ we have that $e_{\alpha }\in C^{\infty }((0,\infty ))\cap C([0,\infty ))$%
, $e_{\alpha }^{\prime }\in C^{\infty }((0,\infty ))\cap
L_{loc}^{1}([0,\infty )),$ where $e_{\alpha }^{\prime }(t)=\frac{\mathrm{d}}{%
\mathrm{d}t}e_{\alpha }(t),\;t>0,$ and%
\begin{equation*}
\mathcal{L}[e_{\alpha }(t)](s)=\frac{s^{\alpha -1}}{s^{\alpha }+\frac{1}{%
\tau }},\;\;\func{Re}s>0,
\end{equation*}%
cf.\ \cite{m-g-00}. Therefore,
\begin{equation*}
\mathcal{L}^{-1}\left[ \frac{1+s^{\alpha }}{1+\tau s^{\alpha }}\right] (t)=%
\mathcal{L}^{-1}\left[ 1+\frac{(1-\tau )s^{\alpha }}{\tau (s^{\alpha }+\frac{%
1}{\tau })}\right] (t)=\frac{1}{\tau }\delta (t)+\left( \frac{1}{\tau }%
-1\right) e_{\alpha }^{\prime }\left( t\right) ,\;\;t>0
\end{equation*}%
and thus we obtain $L_{t}^{\alpha }$ as given by (\ref{eq:L}).

For $\alpha =0$ and $\beta =1$, i.e., when the Hooke law and the classical
strain measure are used, equation (\ref{eq:fzwe}) is the classical wave
equation%
\begin{equation*}
\partial _{t}^{2}u=c^{2}\,\partial _{x}^{2}u,\;\;\text{with}\;\;c=\sqrt{%
\frac{2}{1+\tau }}.
\end{equation*}%
Therefore, system (\ref{eq:system1}) - (\ref{eq:system3}), or equivalently (%
\ref{eq:fzwe}), generalize the classical wave equation. We collect other
special cases of (\ref{eq:fzwe}) in following remark.

\begin{remark}
\label{cases-rem}Generalizations of the classical wave equation, given by
system (\ref{eq:system1}) - (\ref{eq:system3}), or (\ref{eq:fzwe}), are
distinguished and classified according to parameter $\beta $ as follows.

\begin{enumerate}
\item[$\left( i\right) $] Case $\beta =0.$ We obtain the non-propagating
disturbance if $v_{0}=0$. Namely, for $\beta =0,$ we obtain $\varepsilon =0,$
due to (\ref{eq:system3}) and the property of the symmetrized fractional
derivative that $\mathcal{E}_{x}^{0}u=0$, see Appendix \ref{sec:math prelim}%
. This and (\ref{eq:system2}), imply $\sigma =0,$ so that from (\ref%
{eq:system1}), (\ref{eq:ic}), and (\ref{eq:bc}) one obtains%
\begin{equation}
u\left( x,t\right) =u_{0}\left( x\right) +v_{0}\left( x\right) t,\;\;x\in
\mathbb{R},\;t\geq 0.  \label{non-prop}
\end{equation}%
Note, for $v_{0}=0,$ we have $u\left( x,t\right) =u_{0}\left( x\right) ,$ $%
x\in \mathbb{R},$ $t\geq 0.$

\item[$\left( ii\right) $] Case $\beta \in (0,1).$ For $\alpha =0$ we obtain
the space-fractional wave equation
\begin{equation}
\partial _{t}^{2}u(x,t)=c^{2}\,\partial _{x}\mathcal{E}_{x}^{\beta
}u(x,t),\;\;c=\sqrt{\frac{2}{1+\tau }},\;x\in \mathbb{R},\;t>0,  \label{sfwe}
\end{equation}%
studied in \cite{A-S-09}. Case $\alpha \in \left( 0,1\right) ,$ according to
authors' knowledge, have not been studied in the literature, so it is the
subject of analysis presented in this work. For $\alpha =1,$ (\ref{eq:fzwe})
becomes the space-fractional Zener wave equation
\begin{equation}
\partial _{t}^{2}u(x,t)=L_{t}^{1}\partial _{x}\mathcal{E}_{x}^{\beta
}u(x,t),\;\;x\in \mathbb{R},\;t>0.  \label{sfzwe}
\end{equation}%
For all $\alpha \in \left[ 0,1\right] ,$ when $\beta $ tends to zero,
solution to system (\ref{eq:system1}) - (\ref{eq:system3}), (\ref{eq:ic}), (%
\ref{eq:bc}) tends to (\ref{non-prop}), see Section \ref{cases}. This
suggests that the parameter $\beta $ measures the resistance of the material
to the propagation of initial disturbance.

\item[$\left( iii\right) $] In the case when $\beta =1,$ $\alpha \in (0,1),$
equation (\ref{eq:fzwe}) reduces to the time-fractional Zener wave equation%
\begin{equation}
\partial _{t}^{2}u(x,t)=L_{t}^{\alpha }\partial _{x}^{2}u(x,t),\;\;x\in
\mathbb{R},\;t>0,  \label{tfzwe}
\end{equation}%
studied in \cite{KOZ10,NH}. For $\alpha =0$, as already mentioned above, we
obtain the classical wave equation and for $\alpha =1$ Zener wave equation%
\begin{equation*}
\partial _{t}^{2}u(x,t)=L_{t}^{1}\partial _{x}^{2}u(x,t),\;\;x\in \mathbb{R}%
,\;t>0.
\end{equation*}
\end{enumerate}
\end{remark}

\section{Cauchy problem (\protect\ref{eq:fzwe}), (\protect\ref{eq:ic}) \label%
{sec:fund sol}}

\subsection{Framework}

The framework for our analysis are the spaces of distributions: $\mathcal{S}%
^{\prime }(\mathbb{R})$ (or shortly $\mathcal{S}^{\prime }$) and $\mathcal{K}%
^{\prime }(\mathbb{R})$ (or $\mathcal{K}^{\prime }$) the duals of the
Schwartz space $\mathcal{S}(\mathbb{R})$ (or $\mathcal{S}$) and of the space
$\mathcal{K}(\mathbb{R})$ (or $\mathcal{K}$); $\mathcal{K}$ is the space of
smooth functions $\varphi $ with the property $\sup_{x\in \mathbb{R},\alpha
\leq m}\left\vert \varphi ^{(\alpha )}(x)\right\vert \mathrm{e}^{m\left\vert
x\right\vert }<\infty ,$ $m\in \mathbb{N}_{0}.$ The elements of $\mathcal{S}%
^{\prime },$ respectively of $\mathcal{K}^{\prime },$ are of the form $%
f=\sum_{\alpha =0}^{r}\Phi _{\alpha }^{(\alpha )},$ where $\Phi _{\alpha }$
are continuous functions on $%
%TCIMACRO{\U{211d} }%
%BeginExpansion
\mathbb{R}
%EndExpansion
$ and $\left\vert \Phi _{\alpha }(t)\right\vert \leq C\left( 1+\left\vert
t\right\vert \right) ^{k_{0}},$ respectively $\left\vert \Phi _{\alpha
}(t)\right\vert \leq C\mathrm{e}^{k_{0}|t|},$ $\alpha \leq r,\;t\in \mathbb{R%
},$ for some $C>0$, $r\in \mathbb{N}_{0}$ and $k_{0}\in \mathbb{N}_{0}.$ The
space $\mathcal{S}_{+}^{\prime }$ ($\mathcal{K}_{+}^{\prime }$) is a
subspace of $\mathcal{S}^{\prime }$ ($\mathcal{K}^{\prime }$) consisting of
elements supported by $[0,\infty )$. The elements of $\mathcal{S}%
_{+}^{\prime },$ respectively of $\mathcal{K}_{+}^{\prime },$ are of the
form $f\left( t\right) =(\Phi (t)\left( 1+\left\vert t\right\vert \right)
^{k})^{(p)}$, respectively $f\left( t\right) =(\Phi (t)\mathrm{e}%
^{kt})^{(p)},$ $t\in \mathbb{R},$ where $\Phi $ is a continuous bounded
function such that $\Phi (t)=0,\;t\leq 0$. Note that $\mathcal{S}^{\prime }$
and $\mathcal{S}_{+}^{\prime }$ are subspaces of $\mathcal{K}^{\prime }$ and
$\mathcal{K}_{+}^{\prime },$ respectively. The elements of $\mathcal{K}%
_{+}^{\prime }$ have the Laplace transform, which are analytic functions in
the domain $\func{Re}s>s_{0}>0.$ We also recall that for the Lebesgue spaces
of integrable and bounded functions $L^{1}\left( \mathbf{%
%TCIMACRO{\U{211d} }%
%BeginExpansion
\mathbb{R}
%EndExpansion
}\right) $ and $L^{\infty }\left( \mathbf{%
%TCIMACRO{\U{211d} }%
%BeginExpansion
\mathbb{R}
%EndExpansion
}\right) ,$ $f\ast g\in L^{\infty }\left( \mathbf{%
%TCIMACRO{\U{211d} }%
%BeginExpansion
\mathbb{R}
%EndExpansion
}\right) $ if $f\in L^{1}\left( \mathbf{%
%TCIMACRO{\U{211d} }%
%BeginExpansion
\mathbb{R}
%EndExpansion
}\right) $ and $g\in L^{\infty }\left( \mathbf{%
%TCIMACRO{\U{211d} }%
%BeginExpansion
\mathbb{R}
%EndExpansion
}\right) $.

We shall apply the Fourier transform with respect to $x$ and the Laplace
transform with respect to $t.$ Actually, we shall consider the distributions
within the space $\mathcal{S}^{\prime }\otimes \mathcal{K}_{+}^{\prime },$
which is the subspace of $\mathcal{K}^{\prime }\left(
%TCIMACRO{\U{211d} }%
%BeginExpansion
\mathbb{R}
%EndExpansion
^{2}\right) ,$ consisting of distributions having support in $%
%TCIMACRO{\U{211d} }%
%BeginExpansion
\mathbb{R}
%EndExpansion
\times \left[ 0,\infty \right) .$ For the background of tensor product, we
refer to \cite{Treves-PDEs}. We shall obtain the solution $u$\ as an element
of $C\left(
%TCIMACRO{\U{211d} }%
%BeginExpansion
\mathbb{R}
%EndExpansion
\right) \cap L^{\infty }\left(
%TCIMACRO{\U{211d} }%
%BeginExpansion
\mathbb{R}
%EndExpansion
\right) $\ for fixed $\varphi \in K,$\ i.e., $\left\langle u\left(
x,t\right) ,\varphi \left( t\right) \right\rangle \in C\left(
%TCIMACRO{\U{211d} }%
%BeginExpansion
\mathbb{R}
%EndExpansion
\right) \cap L^{\infty }\left(
%TCIMACRO{\U{211d} }%
%BeginExpansion
\mathbb{R}
%EndExpansion
\right) $\ and elements of $K_{+}^{\prime }$\ for fixed $\psi \in K,$\ i.e.,
$\left\langle u\left( x,t\right) ,\psi \left( x\right) \right\rangle \in
K_{+}^{\prime }.$

\subsection{Existence and uniqueness of a generalized solution \label{eu}}

We consider the existence and uniqueness of the solution to the Cauchy
problem (\ref{eq:fzwe}), (\ref{eq:ic}), (\ref{eq:bc}). If $u_{0}\in C^{1}(%
\mathbb{R})$ and $v_{0}\in C(\mathbb{R})$, then the classical solution to
the Cauchy problem (\ref{eq:fzwe}), (\ref{eq:ic}), (\ref{eq:bc}) is a
function $u(x,t)$ of class $C^{2}$ for $t>0$, of class $C^{1}$ for $t\geq 0$%
, which satisfies equation (\ref{eq:fzwe}) for $t>0$ and initial conditions (%
\ref{eq:ic}) when $t=0,$ as well as the boundary conditions (\ref{eq:bc}).
If the function $u$ is continued by zero for $t<0$, then putting
\begin{equation*}
u(x,t)=\mathcal{U}(x,t)H\left( t\right) ,\;\;x,t\in
%TCIMACRO{\U{211d} }%
%BeginExpansion
\mathbb{R}
%EndExpansion
,
\end{equation*}%
we obtain%
\begin{equation}
\partial _{t}^{2}\mathcal{U}(x,t)=L_{t}^{\alpha }\partial _{x}\mathcal{E}%
_{x}^{\beta }\mathcal{U}(x,t)+u_{0}(x)\delta ^{\prime }(t)+v_{0}(x)\delta
(t),\;\;\text{in}\;\;\mathcal{K}^{\prime }(\mathbb{R}^{2}).
\label{eq:fzwe-distr}
\end{equation}

The main theorem is the following one.

\begin{theorem}
\label{th:glavna}Let $\alpha \in \lbrack 0,1)$, $\beta \in \lbrack 0,1)$, $%
\tau \in (0,1)$ and let $u_{0},v_{0}\in L^{1}\left(
%TCIMACRO{\U{211d} }%
%BeginExpansion
\mathbb{R}
%EndExpansion
\right) $. Then there exists a unique generalized solution $u\in \mathcal{K}%
^{\prime }\left(
%TCIMACRO{\U{211d} }%
%BeginExpansion
\mathbb{R}
%EndExpansion
^{2}\right) ,$ $\limfunc{supp}u\subset
%TCIMACRO{\U{211d} }%
%BeginExpansion
\mathbb{R}
%EndExpansion
\times \left[ 0,\infty \right) ,$ to the space-time fractional Zener wave
equation (\ref{eq:fzwe}), with initial (\ref{eq:ic}) and boundary data (\ref%
{eq:bc}).

More precisely, $u$ is of the form%
\begin{equation}
u(x,t)=\frac{1}{2\pi ^{2}}\left( \delta ^{\prime }\left( t\right)
u_{0}\left( x\right) +\delta \left( t\right) v_{0}\left( x\right) \right)
\ast _{x,t}P\left( x,t\right) ,\;\;x\in
%TCIMACRO{\U{211d} }%
%BeginExpansion
\mathbb{R}
%EndExpansion
,\;t>0,  \label{u-od-x-i-t}
\end{equation}%
where%
\begin{equation*}
P\left( x,t\right) =I\left( x,t\right) -\left( \frac{\partial }{\partial t}%
J_{1}\left( x,t\right) +\frac{\partial ^{2}}{\partial t^{2}}J_{2}\left(
x,t\right) \right) \mathrm{e}^{s_{0}t},\;\;x\in
%TCIMACRO{\U{211d} }%
%BeginExpansion
\mathbb{R}
%EndExpansion
,\;t>0,
\end{equation*}%
with%
\begin{equation*}
J_{1}=\mathrm{i\,}\left( J_{1}^{+}-J_{1}^{-}\right)
,\;\;\;\;J_{2}=J_{2}^{+}+J_{2}^{-}
\end{equation*}%
and ($x\in
%TCIMACRO{\U{211d} }%
%BeginExpansion
\mathbb{R}
%EndExpansion
,$ $t>0$)%
\begin{eqnarray*}
I\left( x,t\right) &=&\int_{-p_{0}}^{p_{0}}\int_{0}^{\infty }\frac{\cos
(\rho x)\mathrm{e}^{s_{0}t}\mathrm{e}^{\mathrm{i}pt}}{\left[ s^{2}+\frac{%
1+s^{\alpha }}{1+\tau s^{\alpha }}\rho ^{1+\beta }\sin \frac{\beta \pi }{2}%
\right] _{s=s_{0}+\mathrm{i}p}}\mathrm{d}\rho \,\mathrm{d}p, \\
J_{1}^{+}\left( x,t\right) &=&\int_{p_{0}}^{\infty }\int_{0}^{1}\frac{\cos
(\rho x)\mathrm{e}^{\mathrm{i}pt}}{p\left[ s^{2}+\frac{1+s^{\alpha }}{1+\tau
s^{\alpha }}\rho ^{1+\beta }\sin \frac{\beta \pi }{2}\right] _{s=s_{0}+%
\mathrm{i}p}}\mathrm{d}\rho \,\mathrm{d}p, \\
J_{1}^{-}\left( x,t\right) &=&\int_{p_{0}}^{\infty }\int_{0}^{1}\frac{\cos
(\rho x)\mathrm{e}^{-\mathrm{i}qt}}{q\left[ s^{2}+\frac{1+s^{\alpha }}{%
1+\tau s^{\alpha }}\rho ^{1+\beta }\sin \frac{\beta \pi }{2}\right]
_{s=s_{0}-\mathrm{i}q}}\mathrm{d}\rho \,\mathrm{d}q, \\
J_{2}^{+}\left( x,t\right) &=&\int_{p_{0}}^{\infty }\int_{1}^{\infty }\frac{%
\cos (\rho x)\mathrm{e}^{\mathrm{i}pt}}{p^{2}\left[ s^{2}+\frac{1+s^{\alpha }%
}{1+\tau s^{\alpha }}\rho ^{1+\beta }\sin \frac{\beta \pi }{2}\right]
_{s=s_{0}+\mathrm{i}p}}\mathrm{d}\rho \,\mathrm{d}p, \\
J_{2}^{-}\left( x,t\right) &=&\int_{p_{0}}^{\infty }\int_{1}^{\infty }\frac{%
\cos (\rho x)\mathrm{e}^{-\mathrm{i}qt}}{q^{2}\left[ s^{2}+\frac{1+s^{\alpha
}}{1+\tau s^{\alpha }}\rho ^{1+\beta }\sin \frac{\beta \pi }{2}\right]
_{s=s_{0}-\mathrm{i}q}}\mathrm{d}\rho \,\mathrm{d}q.
\end{eqnarray*}

Functions $I,$ $J_{1}^{+},$ $J_{1}^{-},$ $J_{2}^{+}$ and $J_{2}^{-}$ are
bounded and continuous functions with respect to $x$ and continuous
exponentially bounded functions with respect to $t.$
\end{theorem}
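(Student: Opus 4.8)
The plan is to pass to the Fourier--Laplace transformed picture, solve the resulting algebraic equation explicitly, and then invert, controlling the inversion integrals by absolute convergence. First I would apply the Fourier transform in $x$ (dual variable $\rho=|\xi|\ge0$) and the Laplace transform in $t$ (dual variable $s$) to the distributional equation (\ref{eq:fzwe-distr}). By Appendix \ref{sec:math prelim} the Fourier symbol of $\partial _x\mathcal{E}_x^\beta$ is $-\rho^{1+\beta}\sin\frac{\beta\pi}{2}$, while $L_t^\alpha$ transforms to multiplication by $\frac{1+s^\alpha}{1+\tau s^\alpha}$ and the source $u_0(x)\delta'(t)+v_0(x)\delta(t)$ becomes $\hat u_0(\rho)\,s+\hat v_0(\rho)$. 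This reduces (\ref{eq:fzwe-distr}) to the algebraic relation $D(\rho,s)\,\widehat{\tilde u}(\rho,s)=\hat u_0(\rho)\,s+\hat v_0(\rho)$, where $D(\rho,s)=s^2+\frac{1+s^\alpha}{1+\tau s^\alpha}\rho^{1+\beta}\sin\frac{\beta\pi}{2}$ is exactly the denominator appearing in $I,J_1^\pm,J_2^\pm$. Hence $\widehat{\tilde u}=(\hat u_0(\rho)\,s+\hat v_0(\rho))/D(\rho,s)$, which identifies $u$ as the convolution (\ref{u-od-x-i-t}) of the data with $P$, the inverse Fourier--Laplace transform of $1/D(\rho,s)$.

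Next I would invert the two transforms on $P$. Fourier inversion in $\rho$ produces the factor $\cos (\rho x)$ with $\rho$ running over $[0,\infty)$ (using that $1/D$ is even in the Fourier variable), contributing $\frac1\pi$; Bromwich inversion along $\func{Re}s=s_0>0$, written as $s=s_0+\mathrm{i}p$, contributes $\frac1{2\pi}$ together with $\mathrm{e}^{s_0t}\mathrm{e}^{\mathrm{i}pt}$, so that the prefactor $\frac1{2\pi^2}=\frac1\pi\cdot\frac1{2\pi}$ in (\ref{u-od-x-i-t}) collects both inversion constants and the raw integrand of $P$ is $\cos(\rho x)\mathrm{e}^{st}/D(\rho,s)$. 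Because this double integral need not converge absolutely for large $|p|$, I would decompose the Bromwich line at $|p|=p_0$ and the $\rho$-axis at $\rho=1$: the central block $|p|\le p_0$, $\rho\in[0,\infty)$ gives $I$; on the tails $|p|>p_0$ I would insert the algebraic weights $1/p$ on $\rho\in[0,1]$ and $1/p^2$ on $\rho\in[1,\infty)$, defining $J_1^\pm,J_2^\pm$, and observe that $\partial_t$ (resp.\ $\partial_t^2$) applied to these weighted integrands restores the missing factor $\mathrm{i}p$ (resp.\ $-p^2$) coming from $\mathrm{e}^{\pm\mathrm{i}pt}$, so that $P=I-(\partial_tJ_1+\partial_t^2J_2)\mathrm{e}^{s_0t}$ reproduces the raw kernel.

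The heart of the argument is then the absolute convergence of $I,J_1^\pm,J_2^\pm$, which rests on a lower bound for $|D(\rho,s_0\pm\mathrm{i}p)|$ along the contour. The factor $\frac{1+s^\alpha}{1+\tau s^\alpha}$ is bounded above and below (between its limits $1$ at $s=0$ and $1/\tau$ at $\infty$), so away from resonance one has $|D|\gtrsim p^2+\rho^{1+\beta}$; the delicate regime is $p^2\approx\frac1\tau\rho^{1+\beta}\sin\frac{\beta\pi}{2}$, where the real part of $D$ nearly cancels and one must fall back on $\func{Im}(s^2)=\pm2s_0p$ together with the sign of $\func{Im}\frac{1+s^\alpha}{1+\tau s^\alpha}$ to keep $D$ bounded away from $0$. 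Granting such an estimate, the $\rho$-integrals converge through the $\rho^{-(1+\beta)}$ tail (since $1+\beta>1$), while the inserted weights $1/p$ and $1/p^2$ combine with $|D|^{-1}$ to make the $p$-integrals absolutely convergent on the $J_1$ and $J_2$ blocks. Absolute convergence then yields boundedness in $x$ at once (as $|\cos(\rho x)|\le1$) and, by dominated convergence, continuity in $x$; the factors $\mathrm{e}^{st}$ and $\mathrm{e}^{s_0t}$ give the exponential bound $\mathrm{e}^{s_0t}$ in $t$, with continuity in $t$ again by dominated convergence.

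Finally, uniqueness follows from the injectivity of the Fourier--Laplace transform on $\mathcal{S}'\otimes\mathcal{K}_+'$: any solution in $\mathcal{K}'(\mathbb{R}^2)$ supported in $\mathbb{R}\times[0,\infty)$ must have transform $\widehat{\tilde u}=(\hat u_0\,s+\hat v_0)/D$, hence coincides with (\ref{u-od-x-i-t}). I expect the main obstacle to be precisely the uniform lower bound on $|D(\rho,s)|$ along $\func{Re}s=s_0$: verifying that the dispersion relation $D(\rho,s)=0$ has no roots on this contour (so that $s_0>0$ may be chosen to the right of all characteristic roots) and quantifying $|D|^{-1}$ simultaneously in the regime where $\rho$ and $p$ are both large, where the competition between the $s^2\sim-p^2$ term and the $\rho^{1+\beta}$ term must be handled with care.
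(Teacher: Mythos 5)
Your proposal is correct and takes essentially the same approach as the paper: the same Fourier--Laplace reduction to $\widehat{\tilde{u}}=(s\hat{u}_{0}+\hat{v}_{0})/D$, the same Bromwich parametrization $s=s_{0}+\mathrm{i}p$ split at $|p|=p_{0}$ and $\rho =1$ with weights $1/p$, $1/p^{2}$ compensated by $\partial _{t}$, $\partial _{t}^{2}$, and the same key mechanism for absolute convergence, namely $\func{Im}D\geq 2s_{0}p$ together with the same-signed $\rho ^{1+\beta }$ contribution to the imaginary part. The only steps the paper makes explicit that you leave implicit are the preliminary approximation of $L^{1}$ data by $C_{0}^{\infty }$ functions (to justify the exchange formula) and the choice of $p_{0}$ ensuring $\func{Re}D>0$ on the central block, both routine within your framework.
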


\begin{proof}
The plan of the proof is to solve (\ref{eq:fzwe-distr}) with the assumption
that $u_{0}$ and $v_{0}$ are compactly supported smooth functions, i.e.,
elements of $C_{0}^{\infty }\left(
%TCIMACRO{\U{211d} }%
%BeginExpansion
\mathbb{R}
%EndExpansion
\right) $. Namely, if sequences $\left\{ u_{0n}\right\} _{n\in
%TCIMACRO{\U{2115} }%
%BeginExpansion
\mathbb{N}
%EndExpansion
},\left\{ v_{0n}\right\} _{n\in
%TCIMACRO{\U{2115} }%
%BeginExpansion
\mathbb{N}
%EndExpansion
}\in C_{0}^{\infty }\left(
%TCIMACRO{\U{211d} }%
%BeginExpansion
\mathbb{R}
%EndExpansion
\right) $ are such that $u_{0n}\rightarrow u_{0}$ and $v_{0n}\rightarrow
v_{0}$ in $L^{1}\left( \mathbf{%
%TCIMACRO{\U{211d} }%
%BeginExpansion
\mathbb{R}
%EndExpansion
}\right) $ as $n\rightarrow \infty ,$ then (\ref{u-od-x-i-t}) is understood
as%
\begin{eqnarray*}
&&\left( \delta ^{\prime }\left( t\right) u_{0}\left( x\right) +\delta
\left( t\right) v_{0}\left( x\right) \right) \ast _{x,t}P\left( x,t\right) \\
&&\qquad \qquad \qquad =\lim_{n\rightarrow \infty }\left( \left( \delta
^{\prime }\left( t\right) u_{0n}\left( x\right) +\delta \left( t\right)
v_{0n}\left( x\right) \right) \ast _{x,t}P\left( x,t\right) \right) \;\;%
\text{in}\;\;L^{\infty }\left( \mathbf{%
%TCIMACRO{\U{211d} }%
%BeginExpansion
\mathbb{R}
%EndExpansion
}\right) .
\end{eqnarray*}%
Hence, (\ref{u-od-x-i-t}), with $u_{0},v_{0}\in L^{1}\left(
%TCIMACRO{\U{211d} }%
%BeginExpansion
\mathbb{R}
%EndExpansion
\right) ,$ is a solution to (\ref{eq:fzwe-distr}). In the sequel, we assume
that $u_{0},v_{0}\in C_{0}^{\infty }\left(
%TCIMACRO{\U{211d} }%
%BeginExpansion
\mathbb{R}
%EndExpansion
\right) .$ This enables us to use the exchange formula.

Formally applying the Laplace transform to (\ref{eq:fzwe}) with respect to $%
t $, with the initial conditions (\ref{eq:ic}) taken into account, we obtain%
\begin{equation}
\partial _{x}\mathcal{E}_{x}^{\beta }\tilde{u}(x,s)-s^{2}\frac{1+\tau
s^{\alpha }}{1+s^{\alpha }}\tilde{u}(x,s)=-\frac{1+\tau s^{\alpha }}{%
1+s^{\alpha }}(su_{0}(x)+v_{0}(x)),\;\;x\in
%TCIMACRO{\U{211d} }%
%BeginExpansion
\mathbb{R}
%EndExpansion
,\;\func{Re}s>s_{0},  \label{ODE_Sim-1-cstd}
\end{equation}%
for suitably chosen $s_{0}>0,$ where $\tilde{u}$ is an analytic function
with respect to $s.$ Equation (\ref{ODE_Sim-1-cstd}) is of the type%
\begin{equation}
\partial _{x}\mathcal{E}_{x}^{\beta }u\left( x\right) -\omega \,u\left(
x\right) =-\nu \,u_{0}\left( x\right) -\mu \,v_{0}\left( x\right) ,\;\;x\in
%TCIMACRO{\U{211d} }%
%BeginExpansion
\mathbb{R}
%EndExpansion
,  \label{ODE_Sim-cstd}
\end{equation}%
where
\begin{equation}
\omega =\omega (s)=s^{2}\frac{1+\tau s^{\alpha }}{1+s^{\alpha }},\;\;\nu
=\nu (s)=s\frac{1+\tau s^{\alpha }}{1+s^{\alpha }},\;\;\mu =\mu \left(
s\right) =\frac{1+\tau s^{\alpha }}{1+s^{\alpha }},\;\;\func{Re}s>s_{0}.
\label{omega,ni,mi}
\end{equation}%
We have shown in \cite[Theorem 4.2]{KOZ10} that $\omega (s)\in
%TCIMACRO{\U{2102} }%
%BeginExpansion
\mathbb{C}
%EndExpansion
\setminus (-\infty ,0]$ for $\func{Re}s>0.$ For fixed $s,$ $\func{Re}%
s>s_{0}, $ the unique solution $u\in C\left(
%TCIMACRO{\U{211d} }%
%BeginExpansion
\mathbb{R}
%EndExpansion
\right) \cap L^{\infty }\left(
%TCIMACRO{\U{211d} }%
%BeginExpansion
\mathbb{R}
%EndExpansion
\right) $ to (\ref{ODE_Sim-cstd}), given by%
\begin{equation}
u\left( x\right) =\frac{1}{\pi }\left( \nu \,u_{0}\left( x\right) +\mu
\,v_{0}\left( x\right) \right) \ast _{x}\int_{0}^{\infty }\frac{1}{\rho
^{1+\beta }\sin \frac{\beta \pi }{2}+\omega }\cos (\rho x)\mathrm{d}\rho
,\;\;x\in
%TCIMACRO{\U{211d} }%
%BeginExpansion
\mathbb{R}
%EndExpansion
,  \label{ODE_Sim-2-cstd}
\end{equation}%
is obtained as the inverse Fourier transform of
\begin{equation*}
\hat{u}\left( \xi \right) =\frac{\nu \,\hat{u}_{0}\left( \xi \right) +\mu \,%
\hat{v}_{0}\left( \xi \right) }{|\xi |^{1+\beta }\sin \frac{\beta \pi }{2}%
+\omega },\;\;\xi \in
%TCIMACRO{\U{211d} }%
%BeginExpansion
\mathbb{R}
%EndExpansion
.
\end{equation*}%
The previous expression, with $\omega ,$ $\nu $ and $\mu $ given by (\ref%
{omega,ni,mi}), takes the form%
\begin{equation}
\widehat{\tilde{u}}\left( \xi ,s\right) =\frac{s\hat{u}_{0}\left( \xi
\right) +\hat{v}_{0}\left( \xi \right) }{s^{2}+\frac{1+s^{\alpha }}{1+\tau
s^{\alpha }}|\xi |^{1+\beta }\sin \frac{\beta \pi }{2}},\;\;\xi \in
%TCIMACRO{\U{211d} }%
%BeginExpansion
\mathbb{R}
%EndExpansion
,\;\func{Re}s>s_{0}.  \label{u-tilda,het}
\end{equation}%
In fact, we have that $x\mapsto \int_{0}^{\infty }\frac{1}{\rho ^{1+\beta
}\sin \frac{\beta \pi }{2}+\omega }\cos (\rho x)d\rho $\ is a continuous
bounded function ($C\left(
%TCIMACRO{\U{211d} }%
%BeginExpansion
\mathbb{R}
%EndExpansion
\right) \cap L^{\infty }\left(
%TCIMACRO{\U{211d} }%
%BeginExpansion
\mathbb{R}
%EndExpansion
\right) $) for fixed $\omega =\omega \left( s\right) $. After this function
is convoluted with $\nu \,u_{0}\left( x\right) +\mu \,v_{0}\left( x\right) ,$%
\ where $u_{0},v_{0}\in L^{1}\left(
%TCIMACRO{\U{211d} }%
%BeginExpansion
\mathbb{R}
%EndExpansion
\right) ,$\ one obtains the function that belongs to $C\left(
%TCIMACRO{\U{211d} }%
%BeginExpansion
\mathbb{R}
%EndExpansion
\right) \cap L^{\infty }\left(
%TCIMACRO{\U{211d} }%
%BeginExpansion
\mathbb{R}
%EndExpansion
\right) .$ Therefore, by (\ref{ODE_Sim-2-cstd}), we have the solution to (%
\ref{ODE_Sim-1-cstd}) in the form%
\begin{align}
\tilde{u}(x,s)& =\frac{1}{\pi }\frac{1+\tau s^{\alpha }}{1+s^{\alpha }}%
\left( su_{0}\left( x\right) +v_{0}\left( x\right) \right) \ast
_{x}\int_{0}^{\infty }\frac{1}{\rho ^{1+\beta }\sin \frac{\beta \pi }{2}%
+s^{2}\frac{1+\tau s^{\alpha }}{1+s^{\alpha }}}\cos (\rho x)\mathrm{d}\rho
\notag \\
& =\frac{1}{\pi }\left( su_{0}\left( x\right) +v_{0}\left( x\right) \right)
\ast _{x}\int_{0}^{\infty }\frac{1}{s^{2}+\frac{1+s^{\alpha }}{1+\tau
s^{\alpha }}\rho ^{1+\beta }\sin \frac{\beta \pi }{2}}\cos (\rho x)\mathrm{d}%
\rho ,\;\;x\in
%TCIMACRO{\U{211d} }%
%BeginExpansion
\mathbb{R}
%EndExpansion
,\;\func{Re}s>s_{0}.  \label{eq:LTT-cstd}
\end{align}

Thus, the justification of the previously presented procedure is based on
the analysis of the inverse Laplace transform. Formally, when applied to (%
\ref{eq:LTT-cstd}) the inverse Laplace transform gives%
\begin{equation}
u(x,t)=\frac{1}{2\pi ^{2}}\left( \delta ^{\prime }\left( t\right)
u_{0}\left( x\right) +\delta \left( t\right) v_{0}\left( x\right) \right)
\ast _{x,t}P\left( x,t\right) ,\;\;x\in
%TCIMACRO{\U{211d} }%
%BeginExpansion
\mathbb{R}
%EndExpansion
,\;t>0,  \label{T-cstd}
\end{equation}%
where%
\begin{eqnarray}
P\left( x,t\right) &=&2\pi \mathcal{L}^{-1}\left[ \int_{0}^{\infty }\frac{%
\cos (\rho x)}{s^{2}+\frac{1+s^{\alpha }}{1+\tau s^{\alpha }}\rho ^{1+\beta
}\sin \frac{\beta \pi }{2}}\mathrm{d}\rho \right] \left( x,t\right)
\label{P-cstd} \\
&=&-\mathrm{i}\int_{s_{0}-\mathrm{i}\infty }^{s_{0}+\mathrm{i}\infty
}\int_{0}^{\infty }\frac{\cos (\rho x)\mathrm{e}^{st}}{s^{2}+\frac{%
1+s^{\alpha }}{1+\tau s^{\alpha }}\rho ^{1+\beta }\sin \frac{\beta \pi }{2}}%
\mathrm{d}\rho \,\mathrm{d}s,\;\;x\in
%TCIMACRO{\U{211d} }%
%BeginExpansion
\mathbb{R}
%EndExpansion
,\;t>0.  \label{Q-cstd}
\end{eqnarray}%
Consider the divergent integral (\ref{Q-cstd}). We introduce the
parametrization $s=s_{0}+\mathrm{i}p,$ $p\in
%TCIMACRO{\U{211d} }%
%BeginExpansion
\mathbb{R}
%EndExpansion
,$ in (\ref{Q-cstd}), so that%
\begin{eqnarray*}
P\left( x,t\right) &=&\int_{-\infty }^{\infty }\int_{0}^{\infty }\frac{\cos
(\rho x)\mathrm{e}^{s_{0}t}\mathrm{e}^{\mathrm{i}pt}}{\left[ s^{2}+\frac{%
1+s^{\alpha }}{1+\tau s^{\alpha }}\rho ^{1+\beta }\sin \frac{\beta \pi }{2}%
\right] _{s=s_{0}+\mathrm{i}p}}\mathrm{d}\rho \,\mathrm{d}p \\
&=&I\left( x,t\right) +I^{+}\left( x,t\right) +I^{-}\left( x,t\right)
,\;\;x\in
%TCIMACRO{\U{211d} }%
%BeginExpansion
\mathbb{R}
%EndExpansion
,\;t>0,
\end{eqnarray*}%
where $I,$ $I^{+}$ and $I^{-}$ are given below.

The integral%
\begin{equation*}
I\left( x,t\right) =\int_{-p_{0}}^{p_{0}}\int_{0}^{\infty }\frac{\cos (\rho
x)\mathrm{e}^{s_{0}t}\mathrm{e}^{\mathrm{i}pt}}{\left[ s^{2}+\frac{%
1+s^{\alpha }}{1+\tau s^{\alpha }}\rho ^{1+\beta }\sin \frac{\beta \pi }{2}%
\right] _{s=s_{0}+\mathrm{i}p}}\mathrm{d}\rho \,\mathrm{d}p,\;\;x\in
%TCIMACRO{\U{211d} }%
%BeginExpansion
\mathbb{R}
%EndExpansion
,\;t>0,
\end{equation*}%
is absolutely convergent, since
\begin{equation}
\left\vert I\left( x,t\right) \right\vert \leq \mathrm{e}^{s_{0}t}%
\int_{-p_{0}}^{p_{0}}\int_{0}^{\infty }\frac{1}{\func{Re}\left( \left[ s^{2}+%
\frac{1+s^{\alpha }}{1+\tau s^{\alpha }}\rho ^{1+\beta }\sin \frac{\beta \pi
}{2}\right] _{s=s_{0}+\mathrm{i}p}\right) }\mathrm{d}\rho \,\mathrm{d}%
p<\infty ,\;\;x\in
%TCIMACRO{\U{211d} }%
%BeginExpansion
\mathbb{R}
%EndExpansion
,\;t>0.  \label{i-moduo-cstd}
\end{equation}%
In (\ref{i-moduo-cstd}), $p_{0}$ is chosen so that%
\begin{eqnarray*}
&&\func{Re}\left( \left[ s^{2}+\frac{1+s^{\alpha }}{1+\tau s^{\alpha }}\rho
^{1+\beta }\sin \frac{\beta \pi }{2}\right] _{s=s_{0}+\mathrm{i}p}\right) \\
&&\qquad \qquad =r^{2}\cos (2\varphi )+\frac{1+(1+\tau )r^{\alpha }\cos
(\alpha \varphi )+\tau r^{2\alpha }}{1+2\tau r^{\alpha }\cos (\alpha \varphi
)+\tau ^{2}r^{2\alpha }}\rho ^{1+\beta }\sin \frac{\beta \pi }{2}>0,
\end{eqnarray*}%
with $r=\sqrt{s_{0}^{2}+p^{2}}$ and $\tan \varphi =\frac{p}{s_{0}}.$ Note
that for $p=0$ and $\rho =0$ the integrand is well-defined, due to $s_{0}>0.$
Thus, the integral $I$ exists and belongs to $C\left(
%TCIMACRO{\U{211d} }%
%BeginExpansion
\mathbb{R}
%EndExpansion
\right) \cap L^{\infty }\left(
%TCIMACRO{\U{211d} }%
%BeginExpansion
\mathbb{R}
%EndExpansion
\right) $ with respect to $x$ and it is a continuous exponentially bounded
function with respect to $t.$ Next, we consider%
\begin{eqnarray}
I^{+}\left( x,t\right) &=&\mathrm{e}^{s_{0}t}\int_{p_{0}}^{\infty
}\int_{0}^{\infty }\frac{\cos (\rho x)\mathrm{e}^{\mathrm{i}pt}}{\left[
s^{2}+\frac{1+s^{\alpha }}{1+\tau s^{\alpha }}\rho ^{1+\beta }\sin \frac{%
\beta \pi }{2}\right] _{s=s_{0}+\mathrm{i}p}}\mathrm{d}\rho \,\mathrm{d}p
\notag \\
&=&-\mathrm{e}^{s_{0}t}\left( \mathrm{i\,}\frac{\partial }{\partial t}%
J_{1}^{+}\left( x,t\right) +\frac{\partial ^{2}}{\partial t^{2}}%
J_{2}^{+}\left( x,t\right) \right) ,\;\;x\in
%TCIMACRO{\U{211d} }%
%BeginExpansion
\mathbb{R}
%EndExpansion
,\;t>0,  \label{iplus-cstd}
\end{eqnarray}%
where $J_{1}^{+}$ and $J_{2}^{+}$ are defined below. Setting%
\begin{equation*}
J_{1}^{+}\left( x,t\right) =\int_{p_{0}}^{\infty }\int_{0}^{1}\frac{\cos
(\rho x)\mathrm{e}^{\mathrm{i}pt}}{p\left[ s^{2}+\frac{1+s^{\alpha }}{1+\tau
s^{\alpha }}\rho ^{1+\beta }\sin \frac{\beta \pi }{2}\right] _{s=s_{0}+%
\mathrm{i}p}}\mathrm{d}\rho \,\mathrm{d}p,\;\;x\in
%TCIMACRO{\U{211d} }%
%BeginExpansion
\mathbb{R}
%EndExpansion
,\;t>0,
\end{equation*}%
we have that the integral $J_{1}^{+}$ exists and belongs to $C\left(
%TCIMACRO{\U{211d} }%
%BeginExpansion
\mathbb{R}
%EndExpansion
\right) \cap L^{\infty }\left(
%TCIMACRO{\U{211d} }%
%BeginExpansion
\mathbb{R}
%EndExpansion
\right) $ with respect to both $x$ and $t,$ since%
\begin{eqnarray*}
\left\vert J_{1}^{+}\left( x,t\right) \right\vert &\leq
&\int_{p_{0}}^{\infty }\int_{0}^{1}\frac{1}{p\func{Im}\left( \left[ s^{2}+%
\frac{1+s^{\alpha }}{1+\tau s^{\alpha }}\rho ^{1+\beta }\sin \frac{\beta \pi
}{2}\right] _{s=s_{0}+\mathrm{i}p}\right) }\mathrm{d}\rho \,\mathrm{d}p \\
&\leq &\frac{1}{2s_{0}}\int_{p_{0}}^{\infty }\int_{0}^{1}\frac{1}{p^{2}}%
\mathrm{d}\rho \,\mathrm{d}p<\infty ,\;\;x\in
%TCIMACRO{\U{211d} }%
%BeginExpansion
\mathbb{R}
%EndExpansion
,\;t>0,
\end{eqnarray*}%
where we used%
\begin{eqnarray}
&&\func{Im}\left( \left[ s^{2}+\frac{1+s^{\alpha }}{1+\tau s^{\alpha }}\rho
^{1+\beta }\sin \frac{\beta \pi }{2}\right] _{s=s_{0}+\mathrm{i}p}\right)
\notag \\
&&\qquad \qquad =r^{2}\sin (2\varphi )+(1-\tau )\frac{r^{\alpha }\sin
(\alpha \varphi )}{1+2\tau r^{\alpha }\cos (\alpha \varphi )+\tau
^{2}r^{2\alpha }}\rho ^{1+\beta }\sin \frac{\beta \pi }{2}>0  \notag \\
&&\qquad \qquad \sim r^{2}\sin (2\varphi )+\frac{1-\tau }{\tau ^{2}}\frac{1}{%
r^{\alpha }}\rho ^{1+\beta }\sin (\alpha \varphi )\sin \frac{\beta \pi }{2}%
,\;\;\text{as}\;\;r\rightarrow \infty  \notag \\
&&\qquad \qquad \sim 2s_{0}p+\frac{1-\tau }{\tau ^{2}}\frac{1}{p^{\alpha }}%
\rho ^{1+\beta }\sin \frac{\alpha \pi }{2}\sin \frac{\beta \pi }{2},\;\;%
\text{as}\;\;p\rightarrow \infty .  \label{im-asimpt}
\end{eqnarray}%
In obtaining (\ref{im-asimpt}) we used: $r^{2}\sin (2\varphi )=r^{2}\frac{%
2\tan \varphi }{1+\tan ^{2}\varphi }=2s_{0}p,$ as well as $\varphi \sim
\frac{\pi }{2}$ and $r\sim p,$ as $p\rightarrow \infty .$ We put
\begin{equation*}
J_{2}^{+}\left( x,t\right) =\int_{p_{0}}^{\infty }\int_{1}^{\infty }\frac{%
\cos (\rho x)\mathrm{e}^{\mathrm{i}pt}}{p^{2}\left[ s^{2}+\frac{1+s^{\alpha }%
}{1+\tau s^{\alpha }}\rho ^{1+\beta }\sin \frac{\beta \pi }{2}\right]
_{s=s_{0}+\mathrm{i}p}}\mathrm{d}\rho \,\mathrm{d}p,\;\;x\in
%TCIMACRO{\U{211d} }%
%BeginExpansion
\mathbb{R}
%EndExpansion
,\;t>0,
\end{equation*}%
which, by (\ref{im-asimpt}) and the Fubini theorem gives%
\begin{eqnarray*}
\left\vert J_{2}^{+}\left( x,t\right) \right\vert &\leq
&\int_{p_{0}}^{\infty }\int_{1}^{\infty }\frac{1}{p^{2}\func{Im}\left( \left[
s^{2}+\frac{1+s^{\alpha }}{1+\tau s^{\alpha }}\rho ^{1+\beta }\sin \frac{%
\beta \pi }{2}\right] _{s=s_{0}+\mathrm{i}p}\right) }\mathrm{d}\rho \,%
\mathrm{d}p \\
&\leq &\int_{p_{0}}^{\infty }\left( \int_{1}^{\infty }\frac{1}{2s_{0}p^{3}+%
\frac{1-\tau }{\tau ^{2}}p^{2-\alpha }\rho ^{1+\beta }\sin \frac{\alpha \pi
}{2}\sin \frac{\beta \pi }{2}}\mathrm{d}\rho \right) \mathrm{d}p \\
&\leq &\frac{\tau ^{2}}{1-\tau }\frac{1}{\sin \frac{\alpha \pi }{2}\sin
\frac{\beta \pi }{2}}\int_{p_{0}}^{\infty }\frac{1}{p^{2-\alpha }}\left(
\int_{1}^{\infty }\frac{1}{\rho ^{1+\beta }}\mathrm{d}\rho \right) \mathrm{d}%
p<\infty ,\;\;x\in
%TCIMACRO{\U{211d} }%
%BeginExpansion
\mathbb{R}
%EndExpansion
,\;t>0.
\end{eqnarray*}%
By the same arguments as for $J_{1}^{+}$, we that the integral $J_{2}^{+}$
exists and belongs to $C\left(
%TCIMACRO{\U{211d} }%
%BeginExpansion
\mathbb{R}
%EndExpansion
\right) \cap L^{\infty }\left(
%TCIMACRO{\U{211d} }%
%BeginExpansion
\mathbb{R}
%EndExpansion
\right) $ with respect to both $x$ and $t.$ Thus, we have that $I^{+},$
given by (\ref{iplus-cstd}), belongs to $C\left(
%TCIMACRO{\U{211d} }%
%BeginExpansion
\mathbb{R}
%EndExpansion
\right) \cap L^{\infty }\left(
%TCIMACRO{\U{211d} }%
%BeginExpansion
\mathbb{R}
%EndExpansion
\right) $ with respect to $x,$ and it is a derivative of a continuous
exponentially bounded function with respect to $t.$ Similarly as for $I^{+}$%
, we prove the existence for
\begin{eqnarray}
I^{-}\left( x,t\right) &=&\int_{-\infty }^{-p_{0}}\int_{0}^{\infty }\frac{%
\cos (\rho x)\mathrm{e}^{s_{0}t}\mathrm{e}^{\mathrm{i}pt}}{\left[ s^{2}+%
\frac{1+s^{\alpha }}{1+\tau s^{\alpha }}\rho ^{1+\beta }\sin \frac{\beta \pi
}{2}\right] _{s=s_{0}+\mathrm{i}p}}\mathrm{d}\rho \,\mathrm{d}p  \notag \\
&=&\int_{p_{0}}^{\infty }\int_{0}^{\infty }\frac{\cos (\rho x)\mathrm{e}%
^{s_{0}t}\mathrm{e}^{-\mathrm{i}qt}}{\left[ s^{2}+\frac{1+s^{\alpha }}{%
1+\tau s^{\alpha }}\rho ^{1+\beta }\sin \frac{\beta \pi }{2}\right]
_{s=s_{0}-\mathrm{i}q}}\mathrm{d}\rho \,\mathrm{d}q  \notag \\
&=&\mathrm{e}^{s_{0}t}\left( \mathrm{i\,}\frac{\partial }{\partial t}%
J_{1}^{-}\left( x,t\right) -\frac{\partial ^{2}}{\partial t^{2}}%
J_{2}^{-}\left( x,t\right) \right) ,\;\;x\in
%TCIMACRO{\U{211d} }%
%BeginExpansion
\mathbb{R}
%EndExpansion
,\;t>0.  \label{iminus-cstd}
\end{eqnarray}%
We have%
\begin{equation*}
J_{1}^{-}\left( x,t\right) =\int_{p_{0}}^{\infty }\int_{0}^{1}\frac{\cos
(\rho x)\mathrm{e}^{-\mathrm{i}qt}}{q\left[ s^{2}+\frac{1+s^{\alpha }}{%
1+\tau s^{\alpha }}\rho ^{1+\beta }\sin \frac{\beta \pi }{2}\right]
_{s=s_{0}-\mathrm{i}q}}\mathrm{d}\rho \,\mathrm{d}q,\;\;x\in
%TCIMACRO{\U{211d} }%
%BeginExpansion
\mathbb{R}
%EndExpansion
,\;t>0,
\end{equation*}%
so that%
\begin{equation*}
\left\vert J_{1}^{-}\left( x,t\right) \right\vert \leq \int_{p_{0}}^{\infty
}\int_{0}^{1}\frac{1}{q\left\vert \func{Im}\left( \left[ s^{2}+\frac{%
1+s^{\alpha }}{1+\tau s^{\alpha }}\rho ^{1+\beta }\sin \frac{\beta \pi }{2}%
\right] _{s=s_{0}-\mathrm{i}q}\right) \right\vert }\mathrm{d}\rho \,\mathrm{d%
}q,\;\;x\in
%TCIMACRO{\U{211d} }%
%BeginExpansion
\mathbb{R}
%EndExpansion
,\;t>0.
\end{equation*}%
The integral $J_{1}^{-}$ exists and belongs to $C\left(
%TCIMACRO{\U{211d} }%
%BeginExpansion
\mathbb{R}
%EndExpansion
\right) \cap L^{\infty }\left(
%TCIMACRO{\U{211d} }%
%BeginExpansion
\mathbb{R}
%EndExpansion
\right) $ with respect to both $x$ and $t,$ since%
\begin{equation*}
\left\vert J_{1}^{-}\left( x,t\right) \right\vert \leq \frac{1}{2s_{0}}%
\int_{p_{0}}^{\infty }\int_{0}^{1}\frac{1}{q^{2}}\mathrm{d}\rho \,\mathrm{d}%
q<\infty ,\;\;x\in
%TCIMACRO{\U{211d} }%
%BeginExpansion
\mathbb{R}
%EndExpansion
,\;t>0,
\end{equation*}%
where $r=\sqrt{s_{0}^{2}+q^{2}},$ $\tan \varphi =-\frac{q}{s_{0}}$ and where
we used%
\begin{eqnarray*}
&&\func{Im}\left( \left[ s^{2}+\frac{1+s^{\alpha }}{1+\tau s^{\alpha }}\rho
^{1+\beta }\sin \frac{\beta \pi }{2}\right] _{s=s_{0}-\mathrm{i}q}\right) \\
&&\qquad \qquad =r^{2}\sin (2\varphi )+(1-\tau )\frac{r^{\alpha }\sin
(\alpha \varphi )}{1+2\tau r^{\alpha }\cos (\alpha \varphi )+\tau
^{2}r^{2\alpha }}\rho ^{1+\beta }\sin \frac{\beta \pi }{2} \\
&&\qquad \qquad \sim r^{2}\sin (2\varphi )+\frac{1-\tau }{\tau ^{2}}\frac{1}{%
r^{\alpha }}\rho ^{1+\beta }\sin (\alpha \varphi )\sin \frac{\beta \pi }{2}%
<0,\;\;\text{as}\;\;r\rightarrow \infty \\
&&\qquad \qquad \sim -\left( 2s_{0}q+\frac{1-\tau }{\tau ^{2}}\frac{1}{%
q^{\alpha }}\rho ^{1+\beta }\sin \frac{\alpha \pi }{2}\sin \frac{\beta \pi }{%
2}\right) ,\;\;\text{as}\;\;q\rightarrow \infty .
\end{eqnarray*}%
Consider
\begin{equation*}
J_{2}^{-}\left( x,t\right) =\int_{p_{0}}^{\infty }\int_{1}^{\infty }\frac{%
\cos (\rho x)\mathrm{e}^{-\mathrm{i}qt}}{q^{2}\left[ s^{2}+\frac{1+s^{\alpha
}}{1+\tau s^{\alpha }}\rho ^{1+\beta }\sin \frac{\beta \pi }{2}\right]
_{s=s_{0}-\mathrm{i}q}}\mathrm{d}\rho \,\mathrm{d}q,\;\;x\in
%TCIMACRO{\U{211d} }%
%BeginExpansion
\mathbb{R}
%EndExpansion
,\;t>0.
\end{equation*}%
We have%
\begin{eqnarray*}
\left\vert J_{2}^{-}\left( x,t\right) \right\vert &\leq
&\int_{p_{0}}^{\infty }\int_{1}^{\infty }\frac{1}{q^{2}\left\vert \func{Im}%
\left( \left[ s^{2}+\frac{1+s^{\alpha }}{1+\tau s^{\alpha }}\rho ^{1+\beta
}\sin \frac{\beta \pi }{2}\right] _{s=s_{0}-\mathrm{i}q}\right) \right\vert }%
\mathrm{d}\rho \,\mathrm{d}q \\
&\leq &\int_{p_{0}}^{\infty }\left( \int_{1}^{\infty }\frac{1}{2s_{0}q^{3}+%
\frac{1-\tau }{\tau ^{2}}q^{2-\alpha }\rho ^{1+\beta }\sin \frac{\alpha \pi
}{2}\sin \frac{\beta \pi }{2}}\mathrm{d}\rho \right) \mathrm{d}q \\
&\leq &\frac{\tau ^{2}}{1-\tau }\frac{1}{\sin \frac{\alpha \pi }{2}\sin
\frac{\beta \pi }{2}}\int_{p_{0}}^{\infty }\frac{1}{q^{2-\alpha }}\left(
\int_{1}^{\infty }\frac{1}{\rho ^{1+\beta }}\mathrm{d}\rho \right) \mathrm{d}%
q,\;\;x\in
%TCIMACRO{\U{211d} }%
%BeginExpansion
\mathbb{R}
%EndExpansion
,\;t>0.
\end{eqnarray*}%
The integral $J_{2}^{-}$ exists and belongs to $C\left(
%TCIMACRO{\U{211d} }%
%BeginExpansion
\mathbb{R}
%EndExpansion
\right) \cap L^{\infty }\left(
%TCIMACRO{\U{211d} }%
%BeginExpansion
\mathbb{R}
%EndExpansion
\right) $ with respect to both $x$ and $t.$ Thus, we have that $I^{-},$
given by (\ref{iminus-cstd}), belongs to $C\left(
%TCIMACRO{\U{211d} }%
%BeginExpansion
\mathbb{R}
%EndExpansion
\right) \cap L^{\infty }\left(
%TCIMACRO{\U{211d} }%
%BeginExpansion
\mathbb{R}
%EndExpansion
\right) $ with respect to $x,$ and it is a derivative of a continuous
exponentially bounded function with respect to $t.$

Thus, $P$\ has the form%
\begin{equation*}
P\left( x,t\right) =I\left( x,t\right) -\left( \mathrm{i\,}\frac{\partial }{%
\partial t}\left( J_{1}^{+}\left( x,t\right) -J_{1}^{-}\left( x,t\right)
\right) +\frac{\partial ^{2}}{\partial t^{2}}\left( J_{2}^{+}\left(
x,t\right) +J_{2}^{-}\left( x,t\right) \right) \right) \mathrm{e}%
^{s_{0}t},\;\;x\in
%TCIMACRO{\U{211d} }%
%BeginExpansion
\mathbb{R}
%EndExpansion
,\;t>0,
\end{equation*}%
where $I,$ $J_{1}^{+},$ $J_{1}^{-},$ $J_{2}^{+}$ and $J_{2}^{-}$ are bounded
and continuous functions with respect to $x$ and continuous exponentially
bounded functions with respect to $t.$
\end{proof}

\subsection{Regularization of a generalized solution \label{sec:calc}}

We give a regularization of the generalized solution $u$ to the space-time
fractional Zener wave equation (\ref{eq:fzwe-distr}), which is of particular
importance for the numerical analysis of the problem. We start from the
Fourier and Laplace transform of the solution given by (\ref{u-tilda,het})
and write it as%
\begin{equation}
\widehat{\tilde{u}}(\xi ,s)=\left( \hat{u}_{0}\left( \xi \right) +\frac{1}{s}%
\hat{v}_{0}\left( \xi \right) \right) \widehat{\tilde{K}}\left( \xi
,s\right) ,\;\;\xi \in
%TCIMACRO{\U{211d} }%
%BeginExpansion
\mathbb{R}
%EndExpansion
,\;\func{Re}s>s_{0},  \label{u-tilda-het}
\end{equation}%
where%
\begin{eqnarray}
\widehat{\tilde{K}}\left( \xi ,s\right) &=&\frac{s}{s^{2}+\frac{1+s^{\alpha }%
}{1+\tau s^{\alpha }}\left\vert \xi \right\vert ^{1+\beta }\sin \frac{\beta
\pi }{2}}=\frac{1}{s}-\widehat{\tilde{Q}}\left( \xi ,s\right) ,\;\;\text{with%
}  \label{K-tilda,het} \\
\widehat{\tilde{Q}}\left( \xi ,s\right) &=&\frac{\frac{1+s^{\alpha }}{1+\tau
s^{\alpha }}\left\vert \xi \right\vert ^{1+\beta }\sin \frac{\beta \pi }{2}}{%
s^{3}+s\frac{1+s^{\alpha }}{1+\tau s^{\alpha }}\left\vert \xi \right\vert
^{1+\beta }\sin \frac{\beta \pi }{2}},\;\;\xi \in
%TCIMACRO{\U{211d} }%
%BeginExpansion
\mathbb{R}
%EndExpansion
,\;\func{Re}s>s_{0}.  \label{Q-tilda,het}
\end{eqnarray}%
Note that
\begin{equation*}
\widehat{\tilde{K}}\left( \xi ,s\right) =\frac{1}{2\pi }s\widehat{\tilde{P}}%
\left( \xi ,s\right) ,\;\;\text{i.e.,}\;\;K\left( x,t\right) =\frac{1}{2\pi }%
\frac{\partial }{\partial t}P\left( x,t\right) ,\;\;x,\xi \in
%TCIMACRO{\U{211d} }%
%BeginExpansion
\mathbb{R}
%EndExpansion
,\;t>0,
\end{equation*}%
where $P$ is given by (\ref{P-cstd}). We already know from Theorem \ref%
{th:glavna} that $P$ (and therefore $K$ as well) is a distribution. We
regularize $\widehat{\tilde{K}}$ by multiplying it with the Fourier
transform of the Gaussian%
\begin{equation*}
\delta _{\varepsilon }\left( x\right) =\frac{1}{\varepsilon \sqrt{\pi }}%
\mathrm{e}^{-\frac{x^{2}}{\varepsilon ^{2}}},\;\;x\in
%TCIMACRO{\U{211d} }%
%BeginExpansion
\mathbb{R}
%EndExpansion
,\;\varepsilon \in \left( 0,1\right] ,
\end{equation*}%
which is a $\delta $-net, i.e., Gaussian in a limiting process $\varepsilon
\rightarrow 0$ represents the Dirac delta distribution. Thus, we have that%
\begin{equation}
\widehat{\tilde{K}}_{\varepsilon }\left( \xi ,s\right) =\widehat{\tilde{K}}%
\left( \xi ,s\right) \mathrm{e}^{-\frac{\left( \varepsilon \xi \right) ^{2}}{%
4}},\;\;\text{where}\;\;\mathcal{F}\left[ \delta _{\varepsilon }\left(
x\right) \right] \left( \xi \right) =\mathrm{e}^{-\frac{\left( \varepsilon
\xi \right) ^{2}}{4}},\;\;\xi \in
%TCIMACRO{\U{211d} }%
%BeginExpansion
\mathbb{R}
%EndExpansion
,\;\func{Re}s>s_{0},\;\varepsilon \in \left( 0,1\right] ,
\label{K-tilda-het-epsilon}
\end{equation}%
has the inverse Laplace and Fourier transforms which is a function and in a
distributional limit gives the solution kernel $K$ as a distribution.

We summarize these observations in the following theorem, given after we
state the lemma.

\begin{lemma}
\label{lemma_2}Let $\alpha \in \lbrack 0,1)$, $\tau \in (0,1)$ and $\theta
>0 $. Then
\begin{equation*}
\Psi _{\alpha }(s)=s^{2}+\theta \frac{1+s^{\alpha }}{1+\tau s^{\alpha }}%
,\;\;s\in
%TCIMACRO{\U{2102} }%
%BeginExpansion
\mathbb{C}
%EndExpansion
,
\end{equation*}%
admits exactly two zeros. They are complex-conjugate, located in the left
complex half-plane and each of them is of the multiplicity one.
\end{lemma}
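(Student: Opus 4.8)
The plan is to treat $\Psi_{\alpha}$ as a function holomorphic on the cut plane $\mathbb{C}\setminus(-\infty,0]$ (principal branch of $s^{\alpha}$), exclude all zeros from the closed right half-plane by a sign argument, and then count the remaining zeros by the argument principle along a keyhole contour around the cut. First I would verify that $\Psi_{\alpha}$ is genuinely holomorphic there, i.e.\ that its denominator never vanishes: $1+\tau s^{\alpha}=0$ forces $s^{\alpha}=-1/\tau$, hence $\alpha\arg s=\pm\pi$, i.e.\ $\arg s=\pm\pi/\alpha\notin(-\pi,\pi)$ because $\alpha\in(0,1)$. Thus there are no poles and the argument principle will count only zeros; moreover, since $\Psi_{\alpha}(s)=s^{2}\bigl(1+O(|s|^{-2})\bigr)$ as $|s|\to\infty$, the zeros are bounded and a fixed large radius $R$ will enclose all of them.

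Next I would exclude zeros in the closed right half-plane. Writing $w=s^{\alpha}$ and $g(s)=\frac{1+s^{\alpha}}{1+\tau s^{\alpha}}$, a short computation gives
\[
\func{Im}\,g(s)=\frac{(1-\tau)\,\func{Im}\,w}{\left\vert 1+\tau w\right\vert ^{2}},
\]
whose sign equals that of $\func{Im}\,w=\func{Im}\,s^{\alpha}$, which in turn equals the sign of $\func{Im}\,s$ (since $\arg s^{\alpha}=\alpha\arg s$ with $\alpha\in(0,1)$). Hence for $\func{Re}\,s\geq0$ with $\func{Im}\,s>0$ both $\func{Im}\,s^{2}\geq0$ and $\theta\,\func{Im}\,g(s)>0$, so $\func{Im}\,\Psi_{\alpha}(s)>0$; the case $\func{Im}\,s<0$ is symmetric, while on the nonnegative real axis $\Psi_{\alpha}(x)=x^{2}+\theta\frac{1+x^{\alpha}}{1+\tau x^{\alpha}}>0$ and $\Psi_{\alpha}(0)=\theta>0$. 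Therefore $\Psi_{\alpha}$ has no zero with $\func{Re}\,s\geq0$. (Equivalently, for $\func{Re}\,s>0$ one may factor $\Psi_{\alpha}=\frac{s^{2}}{\omega}(\omega+\theta)$ with $\omega(s)=s^{2}\frac{1+\tau s^{\alpha}}{1+s^{\alpha}}$ and invoke $\omega(s)\notin(-\infty,0]$ from \cite{KOZ10}.)

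Then I would apply the argument principle on the positively oriented boundary of $\{|s|<R\}\setminus[-R,0]$, with $R$ large and a small circle of radius $\epsilon$ around the origin. On the outer circle $\Psi_{\alpha}(s)=s^{2}(1+O(R^{-2}))$ uniformly in $\arg s$ (because $g(s)\to1/\tau$), so this arc contributes $\frac{1}{2\pi}\Delta\arg\Psi_{\alpha}\to\frac{1}{2\pi}\cdot4\pi=2$. On the two edges of the cut, $s=-x\pm\mathrm{i}0$ gives $s^{\alpha}=x^{\alpha}\mathrm{e}^{\pm\mathrm{i}\alpha\pi}$ and $s^{2}=x^{2}$, so by the displayed identity $\func{Im}\,\Psi_{\alpha}$ keeps a constant sign along each edge (positive on the upper edge, negative on the lower); since $\Psi_{\alpha}$ also tends to a positive real limit at both ends of each edge ($x^{2}$ at infinity, $\theta$ near the origin), it stays in one half-plane and each edge contributes zero net change of argument. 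The small circle contributes nothing, as $\Psi_{\alpha}\to\theta>0$. Hence $\Psi_{\alpha}$ has exactly two zeros in $\mathbb{C}\setminus(-\infty,0]$.

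Finally I would combine the two facts: the two zeros avoid the closed right half-plane, so both lie in the open left half-plane; and since the coefficients are real and $\overline{s^{\alpha}}=\bar{s}^{\,\alpha}$ for the principal branch, one has $\Psi_{\alpha}(\bar{s})=\overline{\Psi_{\alpha}(s)}$, so zeros occur in conjugate pairs. As the negative real axis is excluded, the two zeros form a single conjugate pair $\{s_{0},\bar{s}_{0}\}$ with $\func{Im}\,s_{0}\neq0$, and each must be simple, for a non-real double zero would force its conjugate to be double too, giving four zeros. The hard part will be the branch-cut bookkeeping in the argument principle, specifically showing that the two edges of the cut contribute no winding; this reduces entirely to the constant sign of $\func{Im}\,\Psi_{\alpha}$ along each edge. (For the boundary value $\alpha=0$ the damping disappears, $g\equiv 2/(1+\tau)$, and the pair degenerates onto the imaginary axis at $s=\pm\mathrm{i}\sqrt{2\theta/(1+\tau)}$.)
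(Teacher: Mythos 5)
Your proof is correct, and it shares the paper's overall strategy---sign analysis of $\func{Im}\Psi_{\alpha}$ to rule out the closed right half-plane, conjugate symmetry, and the argument principle---but the counting step takes a genuinely different route. The paper applies the argument principle to a quarter-annulus bounding the second quadrant (positive imaginary half-axis, large arc from $\frac{\pi}{2}$ to $\pi$, the upper edge of the cut, and a small arc), finds a net argument change of $-2\pi$, hence exactly one simple zero in that quadrant, and doubles it by conjugation. You instead run a keyhole contour around the cut in all of $\mathbb{C}\setminus(-\infty,0]$, read off the count of two from the $s^{2}$ behaviour on the large circle, and only afterwards use conjugation together with the right half-plane exclusion to identify the zeros as one simple conjugate pair; your simplicity argument (a nonreal double zero would force four zeros) likewise replaces the paper's, where simplicity is automatic from the count of one inside the quadrant. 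The trade-offs are minor: the paper's contour has the imaginary axis on its boundary (so the absence of zeros there is built into the contour analysis) but touches only one edge of the cut, whereas yours needs both edges, which you handle symmetrically via the identity $\func{Im}\,g=(1-\tau)\func{Im}(s^{\alpha})/\left\vert 1+\tau s^{\alpha}\right\vert ^{2}$---cleaner than the paper's explicit polar formulas (\ref{eq:rep})--(\ref{eq:imp}). Finally, your closing parenthetical about $\alpha=0$ is not idle: for $\alpha=0$ the zeros are $\pm\mathrm{i}\sqrt{2\theta/(1+\tau)}$, on the imaginary axis, so the conclusion ``located in the left complex half-plane'' genuinely requires $\alpha\in(0,1)$; the paper's proof glosses over this (its assertion $\sin(\alpha\varphi)>0$ fails at $\alpha=0$), while you flag the degeneration explicitly.
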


\begin{theorem}
\label{th:poslednja}Let all conditions of Theorem \ref{th:glavna} be
satisfied. Let $u\in \mathcal{K}^{\prime }\left(
%TCIMACRO{\U{211d} }%
%BeginExpansion
\mathbb{R}
%EndExpansion
^{2}\right) ,$ with support in $%
%TCIMACRO{\U{211d} }%
%BeginExpansion
\mathbb{R}
%EndExpansion
\times \left[ 0,\infty \right) ,$ be generalized solution to the space-time
fractional Zener wave equation (\ref{eq:fzwe}), with initial (\ref{eq:ic})
and boundary data (\ref{eq:bc}). Then $u$ is of the form:
\begin{equation}
u(x,t)=\left( u_{0}(x)\delta \left( t\right) +v_{0}(x)H\left( t\right)
\right) \ast _{x,t}K\left( x,t\right) ,  \label{eq:sol-u}
\end{equation}%
where $K$ is a distributional limit in $\mathcal{K}^{\prime }\left(
%TCIMACRO{\U{211d} }%
%BeginExpansion
\mathbb{R}
%EndExpansion
^{2}\right) $:%
\begin{equation}
K\left( x,t\right) =\lim_{\varepsilon \rightarrow 0}K_{\varepsilon }\left(
x,t\right) ,\;\;K_{\varepsilon }\left( x,t\right) =\frac{1}{\pi }%
\int_{0}^{\infty }S\left( \rho ,t\right) \cos \left( \rho x\right) \mathrm{e}%
^{-\frac{\left( \varepsilon \rho \right) ^{2}}{4}}\mathrm{d}\rho ,\;\;x\in
%TCIMACRO{\U{211d} }%
%BeginExpansion
\mathbb{R}
%EndExpansion
,\;t>0,  \label{Q}
\end{equation}%
with%
\begin{eqnarray}
S\left( \rho ,t\right) &=&\frac{1}{2\pi \mathrm{i}}\int_{0}^{\infty }\left(
\frac{1}{q^{2}+\frac{1+q^{\alpha }\mathrm{e}^{\mathrm{i}\alpha \pi }}{1+\tau
q^{\alpha }\mathrm{e}^{\mathrm{i}\alpha \pi }}\rho ^{1+\beta }\sin \frac{%
\beta \pi }{2}}-\frac{1}{q^{2}+\frac{1+q^{\alpha }\mathrm{e}^{-\mathrm{i}%
\alpha \pi }}{1+\tau q^{\alpha }\mathrm{e}^{-\mathrm{i}\alpha \pi }}\rho
^{1+\beta }\sin \frac{\beta \pi }{2}}\right) q\mathrm{e}^{-qt}\mathrm{d}q
\notag \\
&&+\left. \frac{s\mathrm{e}^{st}}{2s+\frac{\alpha \left( 1-\tau \right)
s^{\alpha -1}}{\left( 1+\tau s^{\alpha }\right) ^{2}}\rho ^{1+\beta }\sin
\frac{\beta \pi }{2}}\right\vert _{s=s_{z}\left( \rho \right) }+\left. \frac{%
s\mathrm{e}^{st}}{2s+\frac{\alpha \left( 1-\tau \right) s^{\alpha -1}}{%
\left( 1+\tau s^{\alpha }\right) ^{2}}\rho ^{1+\beta }\sin \frac{\beta \pi }{%
2}}\right\vert _{s=\bar{s}_{z}\left( \rho \right) }.  \label{inv_lap}
\end{eqnarray}%
and $s_{z}$ are zeros of $\Psi _{\alpha }$ from Lemma \ref{lemma_2}.

In particular, for suitable $s_{0}>0,$ $K_{\varepsilon }\left( x,t\right)
\mathrm{e}^{-s_{0}t}$ is bounded and continuous with respect $x\in
%TCIMACRO{\U{211d} }%
%BeginExpansion
\mathbb{R}
%EndExpansion
,$ $t>0,$ for every $\varepsilon \in \left( 0,1\right] .$
\end{theorem}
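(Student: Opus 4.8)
The plan is to first derive the representation (\ref{eq:sol-u}) from the one already obtained in Theorem \ref{th:glavna}, and then to invert the Laplace transform of the regularized kernel explicitly and estimate it. For the first part, recall that Theorem \ref{th:glavna} gives $u=\frac{1}{2\pi^{2}}(\delta^{\prime}(t)u_{0}(x)+\delta(t)v_{0}(x))\ast_{x,t}P(x,t)$ and that $K$ is (up to the constant noted before Lemma \ref{lemma_2}) the time derivative of $P$. Using the convolution rules $\delta^{\prime}\ast_{t}f=\partial_{t}f$ and, on $\mathcal{K}_{+}^{\prime}$, $H\ast_{t}\partial_{t}f=f$, one rewrites $(\delta^{\prime}(t)u_{0}(x))\ast_{x,t}P=u_{0}\ast_{x}\partial_{t}P$ and $(\delta(t)v_{0}(x))\ast_{x,t}P=v_{0}\ast_{x}P=v_{0}\ast_{x}(H\ast_{t}\partial_{t}P)$. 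Replacing $\partial_{t}P$ by the appropriate multiple of $K$ then turns (\ref{u-od-x-i-t}) into $u=(u_{0}(x)\delta(t)+v_{0}(x)H(t))\ast_{x,t}K$, which is (\ref{eq:sol-u}). This step is a purely formal manipulation of convolutions in $\mathcal{S}^{\prime}\otimes\mathcal{K}_{+}^{\prime}$ and carries no analytic difficulty.

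The core of the proof is the explicit inversion of the Laplace transform of $\widehat{\tilde{K}}$ from (\ref{K-tilda,het}). For fixed $\rho=|\xi|$ set $\theta=\rho^{1+\beta}\sin\frac{\beta\pi}{2}$ and $\Psi_{\alpha}(s)=s^{2}+\theta\frac{1+s^{\alpha}}{1+\tau s^{\alpha}}$, so that $S(\rho,t)=\mathcal{L}^{-1}[\widehat{\tilde{K}}](\rho,t)=\frac{1}{2\pi\mathrm{i}}\int_{s_{0}-\mathrm{i}\infty}^{s_{0}+\mathrm{i}\infty}\frac{s\,\mathrm{e}^{st}}{\Psi_{\alpha}(s)}\,\mathrm{d}s$. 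The integrand has the branch cut of $s\mapsto s^{\alpha}$ along $(-\infty,0]$ and, by Lemma \ref{lemma_2}, exactly two simple poles at the complex-conjugate zeros $s_{z}(\rho),\bar{s}_{z}(\rho)$ of $\Psi_{\alpha}$, both in the open left half-plane. I would deform the Bromwich line onto a Hankel contour wrapping the cut: the large left arc is killed by Jordan's lemma for $t>0$ since $\widehat{\tilde{K}}\sim 1/s$, and the small circle about the origin drops out because $\Psi_{\alpha}(s)\to\theta\neq0$ there. Collecting the two residues, with $\Psi_{\alpha}^{\prime}(s)=2s+\frac{\alpha(1-\tau)s^{\alpha-1}}{(1+\tau s^{\alpha})^{2}}\theta$, together with the two-sided integral along the cut (where $s^{\alpha}=q^{\alpha}\mathrm{e}^{\pm\mathrm{i}\alpha\pi}$ just above/below), yields precisely (\ref{inv_lap}).

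Multiplying $\widehat{\tilde{K}}$ by the Gaussian factor $\mathrm{e}^{-(\varepsilon\xi)^{2}/4}$ as in (\ref{K-tilda-het-epsilon}) and inverting the Fourier transform through $\frac{1}{\pi}\int_{0}^{\infty}(\cdot)\cos(\rho x)\,\mathrm{d}\rho$ produces $K_{\varepsilon}$ in the form (\ref{Q}). The remaining, and principal, task is the ``in particular'' claim: that $K_{\varepsilon}(x,t)\mathrm{e}^{-s_{0}t}$ is bounded and continuous for $t>0$. I would split $S$ into its branch-cut part and its residue part. For the cut part the estimates mirror those of Theorem \ref{th:glavna}: bound $\mathrm{Re}\,\Psi_{\alpha}^{\pm}$ and $\mathrm{Im}\,\Psi_{\alpha}^{\pm}$ from below, then use the decay $\mathrm{e}^{-qt}$ and integrability in $q$ and $\rho$. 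For the residue part one needs the large-$\rho$ asymptotics of the zeros: from $\Psi_{\alpha}(s_{z})=0$ one gets $s_{z}^{2}\sim-\theta/\tau$, hence $|s_{z}(\rho)|\sim(\theta/\tau)^{1/2}\sim\rho^{(1+\beta)/2}$, while Lemma \ref{lemma_2} keeps $\mathrm{Re}\,s_{z}(\rho)$ bounded above by a suitable $s_{0}$. Thus $|\mathrm{e}^{s_{z}t}|\mathrm{e}^{-s_{0}t}$ stays bounded and the only growth left in the $\rho$-integrand is algebraic, which is absorbed by the Gaussian $\mathrm{e}^{-(\varepsilon\rho)^{2}/4}$; boundedness and continuity in $(x,t)$ then follow by dominated convergence, the Gaussian furnishing an integrable majorant on compact sets with $t$ bounded away from $0$.

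The main obstacle is exactly this control of the residue contribution: one must show that $\mathrm{Re}\,s_{z}(\rho)$ does not escape to $+\infty$ (so that a single $s_{0}$ serves all $\rho$) and pin down the growth of $|s_{z}(\rho)|$ and of $1/\Psi_{\alpha}^{\prime}(s_{z}(\rho))$, building on Lemma \ref{lemma_2} and the location of zeros established in \cite{KOZ10}. Finally, the distributional limit $K=\lim_{\varepsilon\to0}K_{\varepsilon}$ in $\mathcal{K}^{\prime}(\mathbb{R}^{2})$ follows from $\mathrm{e}^{-(\varepsilon\xi)^{2}/4}\to1$ with a uniform bound, i.e.\ from $\delta_{\varepsilon}\to\delta$, so that $K_{\varepsilon}=K\ast_{x}\delta_{\varepsilon}\to K$; by Theorem \ref{th:glavna} this limit coincides with the kernel obtained there, which identifies (\ref{eq:sol-u}) with (\ref{u-od-x-i-t}).
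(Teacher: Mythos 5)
Your proposal is correct, and its computational core is the same as the paper's: deform the Bromwich line onto the Hankel-type contour around the branch cut of $s\mapsto s^{\alpha }$, invoke Lemma \ref{lemma_2} for the two simple conjugate poles, compute the residues with $\Psi _{\alpha }^{\prime }(s)=2s+\frac{\alpha (1-\tau )s^{\alpha -1}}{(1+\tau s^{\alpha })^{2}}\rho ^{1+\beta }\sin \frac{\beta \pi }{2}$, keep the two-sided integral along the cut, and so arrive at (\ref{inv_lap}) and then (\ref{Q}). The genuine difference is where the analytic work is placed. The paper proves the ``in particular'' statement \emph{before} any contour deformation: it writes $K_{\varepsilon }=H(t)\delta _{\varepsilon }(x)-Q_{\varepsilon }(x,t)$ via (\ref{K-tilda,het}) and shows that the double Bromwich--Fourier integral (\ref{Q-epsilon}) defining $Q_{\varepsilon }$ converges absolutely, splitting it into $J_{\varepsilon },J_{\varepsilon }^{\pm }$ and using $\func{Im}\left( \left[ s^{3}+s\frac{1+s^{\alpha }}{1+\tau s^{\alpha }}\rho ^{1+\beta }\sin \frac{\beta \pi }{2}\right] _{s=s_{0}+\mathrm{i}p}\right) \sim -p^{3}$ together with the Gaussian; this single estimate yields boundedness and continuity of $K_{\varepsilon }\mathrm{e}^{-s_{0}t}$ \emph{and} licenses inverting the Laplace transform first and the Fourier transform second. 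You work in the opposite order: deform first, then estimate the explicit formula (cut part plus residues). That route is workable, and it even gives something the paper leaves implicit (uniform boundedness of $S(\rho ,t)$ itself), but it is technically costlier, and two points in your sketch need emphasis. First, your ``main obstacle'' is smaller than you suggest: Lemma \ref{lemma_2} already confines the zeros to the open left half-plane for every $\rho $, so $\func{Re}s_{z}(\rho )<0$ and any $s_{0}>0$ serves for the residue part; what actually requires proof is a uniform-in-$\rho $ lower bound for the cut denominators $\left\vert q^{2}+\frac{1+q^{\alpha }\mathrm{e}^{\pm \mathrm{i}\alpha \pi }}{1+\tau q^{\alpha }\mathrm{e}^{\pm \mathrm{i}\alpha \pi }}\rho ^{1+\beta }\sin \frac{\beta \pi }{2}\right\vert $, for which you should exploit the difference of the two fractions (its numerator is $O(q^{\alpha })$ near $0$ and $O(q^{-\alpha })$ at infinity) rather than each term separately. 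Second, since you invert the Laplace transform for fixed $\rho $ before the Fourier transform, you still owe the Fubini-type justification that the function so produced really has Fourier--Laplace transform $\widehat{\tilde{K}}_{\varepsilon }$; your dominated-convergence bounds can supply it, but it must be stated, because in the paper this interchange is exactly what the absolute-convergence step is for. Your preliminary reduction of (\ref{u-od-x-i-t}) to (\ref{eq:sol-u}) via $K=\frac{1}{2\pi }\partial _{t}P$ and $H\ast _{t}\partial _{t}f=f$ is fine, though the paper obtains (\ref{eq:sol-u}) directly from the factorization (\ref{u-tilda-het}).
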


\begin{proof}[Proof of Lemma \protect\ref{lemma_2}]
Let $s=r\mathrm{e}^{\mathrm{i}\varphi }$, $r>0$, $\varphi \in (-\pi ,\pi )$.
We have
\begin{eqnarray}
\func{Re}\Psi _{\alpha }(s) &=&r^{2}\cos (2\varphi )+\theta \frac{1+(1+\tau
)r^{\alpha }\cos (\alpha \varphi )+\tau r^{2\alpha }}{1+2\tau r^{\alpha
}\cos (\alpha \varphi )+\tau ^{2}r^{2\alpha }},  \label{eq:rep} \\
\func{Im}\Psi _{\alpha }(s) &=&r^{2}\sin (2\varphi )+\theta (1-\tau )\frac{%
r^{\alpha }\sin (\alpha \varphi )}{1+2\tau r^{\alpha }\cos (\alpha \varphi
)+\tau ^{2}r^{2\alpha }}.  \label{eq:imp}
\end{eqnarray}

From (\ref{eq:rep}), (\ref{eq:imp}) one can easily see that $\Psi _{\alpha
}(s_{z})=0$ implies $\Psi _{\alpha }(\bar{s}_{z})=0$.

Next we show that if $\func{Re}s_{z}>0$, then such $s_{z}$ cannot be a zero
of $\Psi _{\alpha }$ and therefore zeros must lie in the left complex
half-plane. Suppose $\func{Re}s>0$, i.e., $\varphi \in (-\frac{\pi }{2},%
\frac{\pi }{2})$. Since zeros appears in complex-conjugate pairs, we can
suppose $\varphi \in \lbrack 0,\frac{\pi }{2})$. For $\varphi =0$, we have $%
\Psi _{\alpha }(s)>0$. Since $\alpha \in \lbrack 0,1),$ we have that $\alpha
\varphi ,2\varphi \in (0,\pi )$ and therefore $\sin (2\varphi )>0$ and $\sin
(\alpha \varphi )>0$, which together with $\theta >0$ and $\tau \in (0,1)$
implies $\func{Im}\Psi _{\alpha }(s)>0$, so such $s$ cannot be a zero.

It is left to show that there is only one pair of zeros of $\Psi _{\alpha }$%
. We use the argument principle. Recall, if $f$ is an analytic function
inside and on a regular closed curve $C$, and non-zero on $C$, then number
of zeros of $f$ (counted as many times as its multiplicity) inside the
contour $C$ is equal to the total change in the argument of $f(s)$ as $s$
travels around $C$. For our purpose we choose contour $C=C_{1}\cup C_{2}\cup
C_{3}\cup C_{4}$, parametrized as
\begin{gather*}
C_{1}:s=x\mathrm{e}^{\mathrm{i}\frac{\pi }{2}};\;\;x\in \lbrack
r,R],\;\;\;\;C_{2}:s=R\mathrm{e}^{\mathrm{i}\varphi };\;\;\varphi \in \left[
\frac{\pi }{2},\pi \right] , \\
C_{3}:s=x\mathrm{e}^{\mathrm{i}\pi };\;\;x\in \lbrack r,R],\;\;\;\;C_{4}:s=r%
\mathrm{e}^{\mathrm{i}\varphi };\;\;\varphi \in \left[ \frac{\pi }{2},\pi %
\right] ,
\end{gather*}%
where $r<r_{0},$ $R>R_{0}$ and $r_{0},R_{0}$ are chosen as follows: $r_{0}$
is small enough such that for all $r<r_{0}$ it holds that $\func{Re}\Psi
(s)\sim \theta $ and $\func{Im}\Psi (s)\sim \theta \left( 1-\tau \right)
r^{\alpha }\sin \left( \alpha \pi \right) $ and therefore there are no zeros
for $r<r_{0}$, and $R$ is large enough such that for all $R>R_{0}$ it holds
that $\func{Re}\Psi (s)\sim R^{2}\cos (2\varphi )$ and $\func{Im}\Psi
(s)\sim R^{2}\sin (2\varphi )$.

On the contour $C_{1}$, we have that $\func{Im}\Psi _{\alpha }(s)\geq 0$
(since $\tau \in (0,1)$ and $\alpha \in \lbrack 0,1)$ implies $\sin \frac{%
\alpha \pi }{2}\geq 0$), and $\func{Im}\Psi _{\alpha }(s)\rightarrow 0$ for $%
r,x\rightarrow 0,$ as well as for $R,x\rightarrow \infty $. The real part of
$\Psi _{\alpha }$ varies from $\theta $ (for $r,x\rightarrow 0$) to $-\infty
$ (for $R,x\rightarrow \infty $). Therefore, on $C_{1}$ we have $\Delta \Psi
_{\alpha }(s)=-\pi $.

On the contour $C_{2}$, for $R>R_{0},$ $R_{0}$ large enough, we have%
\begin{equation*}
\func{Im}\Psi \sim R^{2}\sin (2\varphi )+\frac{\theta (1-\tau )\sin (\alpha
\varphi )}{\tau ^{2}}\frac{1}{R^{\alpha }}\sim R^{2}\sin (2\varphi )\leq 0
\end{equation*}%
and we have $\func{Im}\Psi _{\alpha }(s)\rightarrow 0$ for both $\varphi =%
\frac{\pi }{2}$ and $\varphi =\pi $. The real part of $\Psi _{\alpha }$
changes from $-\infty $ (for $\varphi =\frac{\pi }{2}$) to $\infty $ (for $%
\varphi =\pi $) since
\begin{equation*}
\func{Re}\Psi (s)\sim R^{2}\cos (2\varphi )+\frac{\theta }{\tau }\sim
R^{2}\cos (2\varphi ).
\end{equation*}%
So, on $C_{2}$ the change of the argument is $\Delta \Psi _{\alpha }(s)=-\pi
$.

On the contours $C_{3}$ and $C_{4}$ argument does not change. On $C_{3}$
imaginary part of $\Psi _{\alpha }$ is always positive and it tends to zero
for both $R\rightarrow \infty $ and $r\rightarrow 0$, while real part
changes from $\infty $ (for $R\rightarrow \infty $) to $\theta $ (for $%
r\rightarrow 0$), and even if it changes the sign it does not change the
argument of $\Psi _{\alpha }$. On $C_{4}$ it holds that $\Psi _{\alpha
}(s)\sim \theta $ and so there is no changes of argument.

Taking all together we have $\Delta \Psi _{\alpha }(s)=-2\pi $ as $s\in C$,
and by the argument principle there is one zero inside of the contour $C$.
Therefore, there is a unique pair of complex-conjugate numbers in left
complex plain which are zeros of $\Psi _{\alpha }$.
\end{proof}

\begin{proof}[Proof of Theorem \protect\ref{th:poslednja}]
Let
\begin{equation}
\widehat{\tilde{K}}_{\varepsilon }\left( \xi ,s\right) =\frac{s}{s^{2}+\frac{%
1+s^{\alpha }}{1+\tau s^{\alpha }}\left\vert \xi \right\vert ^{1+\beta }\sin
\frac{\beta \pi }{2}}\mathrm{e}^{-\frac{\left( \varepsilon \xi \right) ^{2}}{%
4}}=\frac{1}{s}\mathrm{e}^{-\frac{\left( \varepsilon \xi \right) ^{2}}{4}}-%
\widehat{\tilde{Q}}_{\varepsilon }\left( \xi ,s\right) ,\;\;\xi \in
%TCIMACRO{\U{211d} }%
%BeginExpansion
\mathbb{R}
%EndExpansion
,\;\func{Re}s>s_{0},  \label{K-tilda-het-epsilon-1}
\end{equation}%
by (\ref{K-tilda,het}) and (\ref{K-tilda-het-epsilon}). We shall prove that%
\begin{equation*}
\widehat{\tilde{Q}}_{\varepsilon }\left( \xi ,s\right) =\widehat{\tilde{Q}}%
\left( \xi ,s\right) \mathrm{e}^{-\frac{\left( \varepsilon \xi \right) ^{2}}{%
4}}=\frac{\frac{1+s^{\alpha }}{1+\tau s^{\alpha }}\left\vert \xi \right\vert
^{1+\beta }\sin \frac{\beta \pi }{2}}{s^{3}+s\frac{1+s^{\alpha }}{1+\tau
s^{\alpha }}\left\vert \xi \right\vert ^{1+\beta }\sin \frac{\beta \pi }{2}}%
\mathrm{e}^{-\frac{\left( \varepsilon \xi \right) ^{2}}{4}},\;\;\xi \in
%TCIMACRO{\U{211d} }%
%BeginExpansion
\mathbb{R}
%EndExpansion
,\;\func{Re}s>s_{0},
\end{equation*}%
see (\ref{Q-tilda,het}), has the inverse Laplace and Fourier transforms by
examining the convergence of the double integral ($x\in
%TCIMACRO{\U{211d} }%
%BeginExpansion
\mathbb{R}
%EndExpansion
,$ $t>0,$ $\varepsilon \in \left( 0,1\right] $)%
\begin{eqnarray}
Q_{\varepsilon }\left( x,t\right) &=&\frac{1}{\left( 2\pi \right) ^{2}%
\mathrm{i}}\int_{s_{0}-\mathrm{i}\infty }^{s_{0}+\mathrm{i}\infty }\left(
\int_{-\infty }^{\infty }\widehat{\tilde{Q}}_{\varepsilon }\left( \xi
,s\right) \mathrm{e}^{\mathrm{i}\xi x}\mathrm{d}\xi \right) \mathrm{e}^{st}%
\mathrm{d}s  \notag \\
&=&\frac{1}{2\pi ^{2}}\left( J_{\varepsilon }\left( x,t\right)
+J_{\varepsilon }^{+}\left( x,t\right) +J_{\varepsilon }^{-}\left(
x,t\right) \right) \mathrm{e}^{s_{0}t},  \label{Q-epsilon}
\end{eqnarray}%
with ($x\in
%TCIMACRO{\U{211d} }%
%BeginExpansion
\mathbb{R}
%EndExpansion
,$ $t>0,$ $\varepsilon \in \left( 0,1\right] $)%
\begin{eqnarray}
J_{\varepsilon }\left( x,t\right) &=&\int_{-p_{0}}^{p_{0}}\int_{0}^{\infty
}\left. \frac{\frac{1+s^{\alpha }}{1+\tau s^{\alpha }}\rho ^{1+\beta }\sin
\frac{\beta \pi }{2}}{s^{3}+s\frac{1+s^{\alpha }}{1+\tau s^{\alpha }}\rho
^{1+\beta }\sin \frac{\beta \pi }{2}}\right\vert _{s=s_{0}+\mathrm{i}p}\cos
\left( \rho x\right) \mathrm{e}^{-\frac{\left( \varepsilon \rho \right) ^{2}%
}{4}}\mathrm{e}^{\mathrm{i}pt}\mathrm{d}\rho \,\mathrm{d}p,  \label{jot} \\
J_{\varepsilon }^{+}\left( x,t\right) &=&\int_{p_{0}}^{\infty
}\int_{0}^{\infty }\left. \frac{\frac{1+s^{\alpha }}{1+\tau s^{\alpha }}\rho
^{1+\beta }\sin \frac{\beta \pi }{2}}{s^{3}+s\frac{1+s^{\alpha }}{1+\tau
s^{\alpha }}\rho ^{1+\beta }\sin \frac{\beta \pi }{2}}\right\vert _{s=s_{0}+%
\mathrm{i}p}\cos \left( \rho x\right) \mathrm{e}^{-\frac{\left( \varepsilon
\rho \right) ^{2}}{4}}\mathrm{e}^{\mathrm{i}pt}\mathrm{d}\rho \,\mathrm{d}p,
\label{jot+} \\
J_{\varepsilon }^{-}\left( x,t\right) &=&\int_{-\infty
}^{p_{0}}\int_{0}^{\infty }\left. \frac{\frac{1+s^{\alpha }}{1+\tau
s^{\alpha }}\rho ^{1+\beta }\sin \frac{\beta \pi }{2}}{s^{3}+s\frac{%
1+s^{\alpha }}{1+\tau s^{\alpha }}\rho ^{1+\beta }\sin \frac{\beta \pi }{2}}%
\right\vert _{s=s_{0}+\mathrm{i}p}\cos \left( \rho x\right) \mathrm{e}^{-%
\frac{\left( \varepsilon \rho \right) ^{2}}{4}}\mathrm{e}^{\mathrm{i}pt}%
\mathrm{d}\rho \,\mathrm{d}p,  \label{jot-}
\end{eqnarray}%
where we introduced the parametrization $s=s_{0}+\mathrm{i}p,$ $p\in \left(
-\infty ,\infty \right) $ in (\ref{Q-epsilon}) and used the fact that $%
\widehat{\tilde{Q}}_{\varepsilon }$ is an even function in $\xi .$ From (\ref%
{jot}), we have ($x\in
%TCIMACRO{\U{211d} }%
%BeginExpansion
\mathbb{R}
%EndExpansion
,$ $t>0,$ $\varepsilon \in \left( 0,1\right] $)%
\begin{equation*}
\left\vert J_{\varepsilon }\left( x,t\right) \right\vert \leq
\int_{-p_{0}}^{p_{0}}\int_{0}^{\infty }\frac{\left\vert \left. \frac{%
1+s^{\alpha }}{1+\tau s^{\alpha }}\right\vert _{s=s_{0}+\mathrm{i}%
p}\right\vert \rho ^{1+\beta }}{\left\vert \left[ s^{3}+s\frac{1+s^{\alpha }%
}{1+\tau s^{\alpha }}\rho ^{1+\beta }\sin \frac{\beta \pi }{2}\right]
_{s=s_{0}+\mathrm{i}p}\right\vert }\mathrm{e}^{-\frac{\left( \varepsilon
\rho \right) ^{2}}{4}}\mathrm{d}\rho \,\mathrm{d}p<\infty .
\end{equation*}%
Let us estimate the integral given by (\ref{jot+}) as ($x\in
%TCIMACRO{\U{211d} }%
%BeginExpansion
\mathbb{R}
%EndExpansion
,$ $t>0,$ $\varepsilon \in \left( 0,1\right] $)%
\begin{eqnarray}
\left\vert J_{\varepsilon }^{+}\left( x,t\right) \right\vert &\leq
&\int_{p_{0}}^{\infty }\int_{0}^{\infty }\frac{\left\vert \left. \frac{%
1+s^{\alpha }}{1+\tau s^{\alpha }}\right\vert _{s=s_{0}+\mathrm{i}%
p}\right\vert \rho ^{1+\beta }}{\left\vert \left[ s^{3}+s\frac{1+s^{\alpha }%
}{1+\tau s^{\alpha }}\rho ^{1+\beta }\sin \frac{\beta \pi }{2}\right]
_{s=s_{0}+\mathrm{i}p}\right\vert }\mathrm{e}^{-\frac{\left( \varepsilon
\rho \right) ^{2}}{4}}\mathrm{d}\rho \,\mathrm{d}p  \notag \\
&\leq &\int_{p_{0}}^{\infty }\int_{0}^{\infty }\frac{\left\vert \left. \frac{%
1+s^{\alpha }}{1+\tau s^{\alpha }}\right\vert _{s=s_{0}+\mathrm{i}%
p}\right\vert \rho ^{1+\beta }}{\left\vert \func{Im}\left( \left[ s^{3}+s%
\frac{1+s^{\alpha }}{1+\tau s^{\alpha }}\rho ^{1+\beta }\sin \frac{\beta \pi
}{2}\right] _{s=s_{0}+\mathrm{i}p}\right) \right\vert }\mathrm{e}^{-\frac{%
\left( \varepsilon \rho \right) ^{2}}{4}}\mathrm{d}\rho \,\mathrm{d}p.
\label{jot+1}
\end{eqnarray}%
We have $\func{Im}\left( s_{0}+\mathrm{i}p\right) ^{3}=-p^{3}+3s_{0}^{2}p,$ $%
\left\vert \left. \frac{1+s^{\alpha }}{1+\tau s^{\alpha }}\right\vert
_{s=s_{0}+\mathrm{i}p}\right\vert \sim \frac{1}{\tau }$ and
\begin{equation*}
\func{Im}\left( \left[ s\frac{1+s^{\alpha }}{1+\tau s^{\alpha }}\rho
^{1+\beta }\sin \frac{\beta \pi }{2}\right] _{s=s_{0}+\mathrm{i}p}\right)
\sim \frac{1}{\tau }p+s_{0}\frac{1-\tau }{\tau ^{2}}\frac{1}{p^{\alpha }}%
\sin \frac{\alpha \pi }{2},\;\;\text{as}\;\;p\rightarrow \infty ,
\end{equation*}%
since%
\begin{eqnarray*}
\func{Re}\left( \frac{1+s^{\alpha }}{1+\tau s^{\alpha }}\right) &=&\frac{%
1+(1+\tau )r^{\alpha }\cos (\alpha \varphi )+\tau r^{2\alpha }}{1+2\tau
r^{\alpha }\cos (\alpha \varphi )+\tau ^{2}r^{2\alpha }}\sim \frac{1}{\tau }%
,\;\;\text{as}\;\;r\rightarrow \infty , \\
\func{Im}\left( \frac{1+s^{\alpha }}{1+\tau s^{\alpha }}\right) &=&(1-\tau )%
\frac{r^{\alpha }\sin (\alpha \varphi )}{1+2\tau r^{\alpha }\cos (\alpha
\varphi )+\tau ^{2}r^{2\alpha }}\sim \frac{1-\tau }{\tau ^{2}}\frac{1}{%
r^{\alpha }}\sin (\alpha \varphi ),\;\;\text{as}\;\;r\rightarrow \infty ,
\end{eqnarray*}%
and thus for $r=\sqrt{s_{0}^{2}+p^{2}},$ $\tan \varphi =\frac{s_{0}}{p}$
\begin{eqnarray}
&&\func{Im}\left( \left[ s^{3}+s\frac{1+s^{\alpha }}{1+\tau s^{\alpha }}\rho
^{1+\beta }\sin \frac{\beta \pi }{2}\right] _{s=s_{0}+\mathrm{i}p}\right)
\notag \\
&&\qquad \qquad \sim -p^{3}+3s_{0}^{2}p+\frac{1}{\tau }p+s_{0}\frac{1-\tau }{%
\tau ^{2}}\frac{1}{p^{\alpha }}\rho ^{1+\beta }\sin \frac{\alpha \pi }{2}%
\sin \frac{\beta \pi }{2},\;\;\text{as}\;\;p\rightarrow \infty .
\label{im-p-na3}
\end{eqnarray}%
We choose $p_{0}$ so that (\ref{im-p-na3}) becomes%
\begin{equation*}
\func{Im}\left( \left[ s^{3}+s\frac{1+s^{\alpha }}{1+\tau s^{\alpha }}\rho
^{1+\beta }\sin \frac{\beta \pi }{2}\right] _{s=s_{0}+\mathrm{i}p}\right)
\sim -p^{3},\;\;\text{as}\;\;p\rightarrow \infty ,
\end{equation*}%
Thus, for (\ref{jot+1}) we have%
\begin{equation*}
\left\vert J_{\varepsilon }^{+}\left( x,t\right) \right\vert \leq
\int_{p_{0}}^{\infty }\int_{0}^{\infty }\frac{\rho ^{1+\beta }}{p^{3}}%
\mathrm{e}^{-\frac{\left( \varepsilon \rho \right) ^{2}}{4}}\mathrm{d}\rho \,%
\mathrm{d}p<\infty .
\end{equation*}%
Using the same arguments as for (\ref{jot+}), we can prove that $%
J_{\varepsilon }^{-},$ given by (\ref{jot-}), is also absolutely integrable.

We proved that $Q_{\varepsilon },$ given by (\ref{Q-epsilon}), has the
inverse Laplace and Fourier transforms and therefore, by (\ref%
{K-tilda-het-epsilon-1}), we have%
\begin{eqnarray*}
\widehat{\tilde{K}}_{\varepsilon }\left( \xi ,s\right) &=&\frac{1}{s}\mathrm{%
e}^{-\frac{\left( \varepsilon \xi \right) ^{2}}{4}}-\widehat{\tilde{Q}}%
_{\varepsilon }\left( \xi ,s\right) ,\;\;\xi \in
%TCIMACRO{\U{211d} }%
%BeginExpansion
\mathbb{R}
%EndExpansion
,\;\func{Re}s>s_{0},\;\varepsilon \in \left( 0,1\right] ,\;\;\text{i.e.,} \\
K_{\varepsilon }\left( x,t\right) &=&H\left( t\right) \delta _{\varepsilon
}\left( x\right) -Q_{\varepsilon }\left( x,t\right) ,\;\;x\in
%TCIMACRO{\U{211d} }%
%BeginExpansion
\mathbb{R}
%EndExpansion
,\;t>0.
\end{eqnarray*}%
Thus, in (\ref{K-tilda-het-epsilon-1}) we can first invert the Laplace
transform and subsequently the Fourier transform. The Fourier transform of
the solution kernel $\hat{K}_{\varepsilon }$ is obtained by the use the
inversion formula of the Laplace transform%
\begin{equation}
\hat{K}_{\varepsilon }(\rho ,t)=\frac{1}{2\pi \mathrm{i}}\int_{s_{0}-\mathrm{%
i}\infty }^{s_{0}+\mathrm{i}\infty }\widehat{\tilde{K}}_{\varepsilon }(\rho
,s)\mathrm{e}^{st}\mathrm{d}s,\;\;a\geq 0,  \label{inv_laplase}
\end{equation}%
where $\widehat{\tilde{K}}_{\varepsilon }$ is given by (\ref%
{K-tilda-het-epsilon-1}) and the complex integration along the contour $%
\Gamma =\Gamma _{1}\cup \Gamma _{2}\cup \Gamma _{r}\cup \Gamma _{3}\cup
\Gamma _{4}\cup \gamma _{0}$, presented in Figure \ref{kontura}.
\begin{figure}[tbh]
\centering
\includegraphics[scale=0.6]{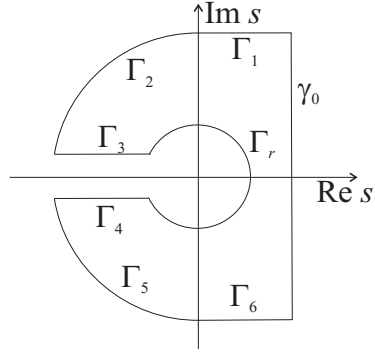}
\caption{Integration contour $\Gamma $.}
\label{kontura}
\end{figure}
The contour $\Gamma $ is parameterized by:
\begin{gather*}
\Gamma _{1}:s=R\mathrm{e}^{\mathrm{i}\varphi },\;\;\varphi _{0}<\varphi <\pi
;\;\;\;\;\Gamma _{2}:s=q\mathrm{e}^{\mathrm{i}\pi },\;\;-R<-q<-r; \\
\Gamma _{r}:r\mathrm{e}^{\mathrm{i}\varphi },\;\;-\pi <-\varphi <\pi
;\;\;\;\;\Gamma _{3}:s=q\mathrm{e}^{-\mathrm{i}\pi },\;\;r<q<R; \\
\Gamma _{4}:s=R\mathrm{e}^{\mathrm{i}\varphi },\;\;-\pi <\varphi <\varphi
_{0};\;\;\;\;\gamma _{0}:s=s_{0}(1+\mathrm{i}\tan \varphi ),\;\;-\varphi
_{0}<\varphi <\varphi _{0},
\end{gather*}%
for arbitrary chosen $R>0$ and $0<r<R$, and $\varphi _{0}=\arccos \frac{s_{0}%
}{R}$. By the Cauchy residues theorem and results of Lemma \ref{lemma_2} we
obtain:%
\begin{equation}
\frac{1}{2\pi \mathrm{i}}\oint_{\Gamma }\widehat{\tilde{K}}_{\varepsilon
}(\rho ,s)\mathrm{e}^{st}\mathrm{d}s=\func{Res}\left( \widehat{\tilde{K}}%
_{\varepsilon }(\rho ,s)\mathrm{e}^{st},s_{z}\left( \rho \right) \right) +%
\func{Res}\left( \widehat{\tilde{K}}_{\varepsilon }(\rho ,s)\mathrm{e}^{st},%
\bar{s}_{z}\left( \rho \right) \right) .  \label{caushy_res}
\end{equation}%
Now, one shows (see, for example, \cite{KOZ10} for similar calculations)
that in (\ref{caushy_res}), when $R$ tends to infinity and $r$ tends to
zero, integrals along contours $\Gamma _{1}$, $\Gamma _{4}$ and $\Gamma _{r}$
tend to zero. The integrals along contours $\Gamma _{2}$ and $\Gamma _{3}$
in limiting process (when $R$ tends to infinity and $r$ tends to zero) read (%
$\rho \geq 0,$ $t>0$)%
\begin{eqnarray*}
\lim_{R\rightarrow \infty ,r\rightarrow 0}\int_{\Gamma _{2}}\widehat{\tilde{K%
}}_{\varepsilon }(\rho ,s)\mathrm{e}^{st}\mathrm{d}s &=&-\mathrm{e}^{-\frac{%
\left( \varepsilon \xi \right) ^{2}}{4}}\int_{0}^{\infty }\frac{q}{q^{2}+%
\frac{1+q^{\alpha }\mathrm{e}^{\mathrm{i}\alpha \pi }}{1+\tau q^{\alpha }%
\mathrm{e}^{\mathrm{i}\alpha \pi }}\rho ^{1+\beta }\sin \frac{\beta \pi }{2}}%
\mathrm{e}^{-qt}\mathrm{d}q, \\
\lim_{R\rightarrow \infty ,r\rightarrow 0}\int_{\Gamma _{3}}\widehat{\tilde{K%
}}_{\varepsilon }(\rho ,s)\mathrm{e}^{st}\mathrm{d}s &=&\mathrm{e}^{-\frac{%
\left( \varepsilon \xi \right) ^{2}}{4}}\int_{0}^{\infty }\frac{q}{q^{2}+%
\frac{1+q^{\alpha }\mathrm{e}^{-\mathrm{i}\alpha \pi }}{1+\tau q^{\alpha }%
\mathrm{e}^{-\mathrm{i}\alpha \pi }}\rho ^{1+\beta }\sin \frac{\beta \pi }{2}%
}\mathrm{e}^{-qt}\mathrm{d}q.
\end{eqnarray*}%
By Lemma \ref{lemma_2}, we have that the residues in (\ref{caushy_res}) read
($\rho \geq 0,$ $t>0$)%
\begin{eqnarray*}
\func{Res}\left( \widehat{\tilde{K}}_{\varepsilon }(\rho ,s)\mathrm{e}%
^{st},s_{z}\left( \rho \right) \right) &=&\left. \frac{s\mathrm{e}^{st}}{2s+%
\frac{\alpha \left( 1-\tau \right) s^{\alpha -1}}{\left( 1+\tau s^{\alpha
}\right) ^{2}}\rho ^{1+\beta }\sin \frac{\beta \pi }{2}}\right\vert
_{s=s_{z}\left( \rho \right) }\mathrm{e}^{-\frac{\left( \varepsilon \xi
\right) ^{2}}{4}}, \\
\func{Res}\left( \widehat{\tilde{K}}_{\varepsilon }(\rho ,s)\mathrm{e}^{st},%
\bar{s}_{z}\left( \rho \right) \right) &=&\left. \frac{s\mathrm{e}^{st}}{2s+%
\frac{\alpha \left( 1-\tau \right) s^{\alpha -1}}{\left( 1+\tau s^{\alpha
}\right) ^{2}}\rho ^{1+\beta }\sin \frac{\beta \pi }{2}}\right\vert _{s=\bar{%
s}_{z}\left( \rho \right) }\mathrm{e}^{-\frac{\left( \varepsilon \xi \right)
^{2}}{4}}.
\end{eqnarray*}%
Integral along the contour $\gamma _{0}$ in limiting process tends to the
integral on the right-hand side of (\ref{inv_laplase}) and therefore,
putting all together in (\ref{caushy_res}) we obtain%
\begin{equation*}
\hat{K}_{\varepsilon }\left( \rho ,t\right) =S\left( \rho ,t\right) \mathrm{e%
}^{-\frac{\left( \varepsilon \rho \right) ^{2}}{4}},\;\;x\in
%TCIMACRO{\U{211d} }%
%BeginExpansion
\mathbb{R}
%EndExpansion
,\;t>0,
\end{equation*}%
with $S$ given by (\ref{inv_lap}). The inverse Fourier transform of such
obtained $\hat{K}_{\varepsilon }$ reads%
\begin{equation*}
K_{\varepsilon }\left( x,t\right) =\frac{1}{\pi }\int_{0}^{\infty }S\left(
\rho ,t\right) \cos \left( \rho x\right) \mathrm{e}^{-\frac{\left(
\varepsilon \rho \right) ^{2}}{4}}\mathrm{d}\rho ,\;\;x\in
%TCIMACRO{\U{211d} }%
%BeginExpansion
\mathbb{R}
%EndExpansion
,\;t>0,
\end{equation*}%
which in the distributional limit when $\varepsilon \rightarrow 0$ gives the
solution kernel $K$ in the form (\ref{Q}).
\end{proof}

\section{Dependence of a solution on parameters $\protect\alpha $ and $%
\protect\beta $ \label{cases}}

We examine the solutions in the limiting cases of system (\ref{eq:system1})
- (\ref{eq:system3}), or equivalently (\ref{eq:fzwe}), subject to (\ref%
{eq:ic}), (\ref{eq:bc}) in the view of Remark \ref{cases-rem}. In all cases
we write the solution $u$ of Theorem \ref{th:poslednja} as
\begin{equation*}
u\left( x,t\right) =\left( u_{0}\left( x\right) \delta \left( t\right)
+v_{0}\left( x\right) H\left( t\right) \right) \ast _{x,t}K_{\alpha ,\beta
}\left( x,t\right) ,\;\;x\in
%TCIMACRO{\U{211d} }%
%BeginExpansion
\mathbb{R}
%EndExpansion
,\;t>0,
\end{equation*}%
where the inverse Fourier transform of $\widehat{\tilde{K}}_{\alpha ,\beta
}, $ (\ref{K-tilda,het}), is given by%
\begin{equation}
\tilde{K}_{\alpha ,\beta }\left( x,s\right) =\frac{1}{\pi }\int_{0}^{\infty }%
\frac{s}{s^{2}+\frac{1+s^{\alpha }}{1+\tau s^{\alpha }}\rho ^{1+\beta }\sin
\frac{\beta \pi }{2}}\cos (\rho x)\mathrm{d}\rho ,\;\;x\in
%TCIMACRO{\U{211d} }%
%BeginExpansion
\mathbb{R}
%EndExpansion
,\;\func{Re}s>0.  \label{K-tilda}
\end{equation}%
Note that the integral in (\ref{K-tilda}) is written formally and it denotes
the inverse Fourier transform. As it will be seen, it may either converge,
or diverge representing a distribution.

We are interested in the behavior of $K_{\alpha ,\beta }$ for $\alpha $ and $%
\beta $ tending to zero and one. We expect that the solution kernel $%
K_{\alpha ,\beta }$ tends to solution kernels in specific cases. From the
form of $K_{\alpha ,\beta },$ given by (\ref{Q}), this cannot be easily
seen. However, numerical examples, see Section \ref{sec:ex}, suggest that
this holds true. What can be seen analytically is that the Laplace transform
of $K_{\alpha ,\beta }$ tends to the the Laplace transforms of solution
kernels in specific cases.

When $\beta \rightarrow 0,$ then, in the sense of distributions,%
\begin{equation*}
\tilde{K}_{\alpha ,0}\left( x,s\right) =\frac{1}{\pi }\int_{0}^{\infty }%
\frac{1}{s}\cos \left( \rho x\right) \mathrm{d}\rho ,\;\;x\in
%TCIMACRO{\U{211d} }%
%BeginExpansion
\mathbb{R}
%EndExpansion
,\;\func{Re}s>0,
\end{equation*}%
and the solution kernel $K_{\alpha ,0}$ is of the form%
\begin{equation}
K_{\alpha ,0}\left( x,t\right) =\delta \left( x\right) H\left( t\right)
,\;\;x\in
%TCIMACRO{\U{211d} }%
%BeginExpansion
\mathbb{R}
%EndExpansion
,\;t>0.  \label{delta-h}
\end{equation}%
This is the case of the non-propagating disturbance, if the initial velocity
is zero and the solution is given by (\ref{non-prop}) with $v_{0}=0$.
Therefore, regardless of the parameter $\alpha ,$ when $\beta $ tends to
zero, solution kernel tends to (\ref{delta-h}). This supports the idea from
Remark \ref{cases-rem}, $\left( ii\right) ,$ that our system can be useful
in modelling materials which resist the propagation of the initial
disturbance.

When $\beta \rightarrow 1,$ we obtain the case of the time-fractional Zener
wave equation, studied in \cite{KOZ10}. In this case%
\begin{equation*}
\tilde{K}_{\alpha ,1}\left( x,s\right) =\frac{1}{\pi }\int_{0}^{\infty }%
\frac{s}{s^{2}+\frac{1+s^{\alpha }}{1+\tau s^{\alpha }}\rho ^{2}}\cos \left(
\rho x\right) \mathrm{d}\rho ,\;\;x\in
%TCIMACRO{\U{211d} }%
%BeginExpansion
\mathbb{R}
%EndExpansion
,\;\func{Re}s>0,
\end{equation*}%
and the calculation similar to one presented in \cite{KOZ10}, leads to
\begin{equation}
K_{\alpha ,1}\left( x,t\right) =\frac{1}{4\pi \mathrm{i}}\int_{0}^{\infty
}\left( f_{-}\left( q\right) \mathrm{e}^{\left\vert x\right\vert
qf_{-}\left( q\right) }-f_{+}\left( q\right) \mathrm{e}^{\left\vert
x\right\vert qf_{+}\left( q\right) }\right) \mathrm{e}^{-qt}\mathrm{d}%
q,\;\;x\in
%TCIMACRO{\U{211d} }%
%BeginExpansion
\mathbb{R}
%EndExpansion
,\;t>0,  \label{koz}
\end{equation}%
with%
\begin{equation*}
f_{+}\left( q\right) =\sqrt{\frac{1+\tau q^{\alpha }\mathrm{e}^{\mathrm{i}%
\alpha \pi }}{1+q^{\alpha }\mathrm{e}^{\mathrm{i}\alpha \pi }}},\;\;\text{and%
}\;\;f_{-}\left( q\right) =\sqrt{\frac{1+\tau q^{\alpha }\mathrm{e}^{-%
\mathrm{i}\alpha \pi }}{1+q^{\alpha }\mathrm{e}^{-\mathrm{i}\alpha \pi }}}%
,\;\;q>0.
\end{equation*}%
We note that in \cite{KOZ10} the solution is given in a slightly different
form.

When $\alpha \rightarrow 0,$ then%
\begin{equation*}
\tilde{K}_{0,\beta }\left( x,s\right) =\frac{1}{\pi }\int_{0}^{\infty }\frac{%
s}{s^{2}+\frac{2}{1+\tau }\rho ^{1+\beta }\sin \frac{\beta \pi }{2}}\cos
\left( \rho x\right) \mathrm{d}\rho ,\;\;x\in
%TCIMACRO{\U{211d} }%
%BeginExpansion
\mathbb{R}
%EndExpansion
,\;\func{Re}s>0.
\end{equation*}%
Using $\mathcal{L}^{-1}\left[ \frac{s}{s^{2}+\omega ^{2}}\right] \left(
t\right) =\cos \left( \omega t\right) $ one easily comes to%
\begin{equation*}
K_{0,\beta }\left( x,t\right) =\frac{1}{\pi }\int_{0}^{\infty }\cos \left( t%
\sqrt{\frac{2}{1+\tau }\rho ^{1+\beta }\sin \frac{\beta \pi }{2}}\right)
\cos \left( \rho x\right) \mathrm{d}\rho ,\;\;x\in
%TCIMACRO{\U{211d} }%
%BeginExpansion
\mathbb{R}
%EndExpansion
,\;t>0,
\end{equation*}%
in the sense of distributions, which can be transformed to
\begin{eqnarray}
K_{0,\beta }\left( x,t\right) &=&\frac{1}{2\pi }\int_{0}^{\infty }\left(
\cos \left( \left( x+ct\sqrt{\frac{1}{\rho ^{1-\beta }}\sin \frac{\beta \pi
}{2}}\right) \rho \right) \right.  \notag \\
&&+\left. \cos \left( \left( x-ct\sqrt{\frac{1}{\rho ^{1-\beta }}\sin \frac{%
\beta \pi }{2}}\right) \rho \right) \right) \mathrm{d}\rho ,\;\;x\in
%TCIMACRO{\U{211d} }%
%BeginExpansion
\mathbb{R}
%EndExpansion
,\;t>0,  \label{alfa0;beta0,1}
\end{eqnarray}%
where $c=\sqrt{\frac{2}{1+\tau }}.$ This is the case of the space-fractional
wave equation studied in \cite{A-S-09}. Note that the solution in \cite%
{A-S-09} is given in a different form. In this case, from (\ref%
{alfa0;beta0,1}), one can recover solution kernels for $\beta =0$ and $\beta
=1.$

If we put $\beta =0$ in (\ref{alfa0;beta0,1}), we obtain%
\begin{equation*}
K_{0,0}\left( x,t\right) =\frac{1}{\pi }\int_{0}^{\infty }\cos \left( x\rho
\right) \mathrm{d}\rho =\delta \left( x\right) ,\;\;x\in
%TCIMACRO{\U{211d} }%
%BeginExpansion
\mathbb{R}
%EndExpansion
,\;t>0,
\end{equation*}%
i.e., (\ref{delta-h}).

For $\beta =1$ in (\ref{alfa0;beta0,1}), we obtain, in the sense of
distributions,
\begin{eqnarray*}
K_{0,1}\left( x,t\right) &=&\frac{1}{2\pi }\int_{0}^{\infty }\left( \cos
\left( \left( x+ct\right) \rho \right) +\cos \left( \left( x-ct\right) \rho
\right) \right) \mathrm{d}\rho ,\;\;x\in
%TCIMACRO{\U{211d} }%
%BeginExpansion
\mathbb{R}
%EndExpansion
,\;t>0, \\
&=&\frac{1}{2}\left( \delta \left( x+ct\right) +\delta \left( x-ct\right)
\right) ,
\end{eqnarray*}%
with $c=\sqrt{\frac{2}{1+\tau }}.$ This is the solution kernel for the
classical wave equation.

\begin{remark}[Question of the wave speed.]
Note that in all cases when $\beta =1$, one obtains the finite and constant
speed of wave propagation: $c=\sqrt{\frac{2}{1+\tau }}$ in the case $\alpha
=0$ and $c=\frac{1}{\sqrt{\tau }}$ in the case $\alpha \in \left( 0,1\right)
.$ For $\beta =0$ the wave speed is zero. The case $\beta \in \left(
0,1\right) $ is much more complicated for investigation. It seems that one
would need to employ other technics, e.g. the theory of Fourier integral
operators in order to reach some conclusions. However, we tend to believe
that in such cases wave speed is not constant, depends on spatial variable $%
x $ and parameter $\beta ,$ i.e., $c=c\left( x,\beta \right) .$
\end{remark}

\section{Numerical examples \label{sec:ex}}

We examine the qualitative properties of the solution to space-time
fractional Zener wave equation (\ref{eq:fzwe}). Further, we investigate the
influence of the orders $\alpha $ and $\beta $ of the, respective, time and
space fractionalization of the constitutive equation and strain measure.
Also, we numerically compare solution to (\ref{eq:fzwe}) with the solutions
to time fractional Zener wave equation (\ref{tfzwe}), that represents the
limiting case $\beta =1$ in (\ref{eq:fzwe}). Both equations are subject to
initial conditions $u_{0}=\delta ,$ $v_{0}=0.$ In this case, the solution to
(\ref{eq:fzwe}), given by (\ref{eq:sol-u}), (\ref{Q}), becomes%
\begin{equation}
u\left( x,t\right) =\delta \left( x\right) \ast _{x}K\left( x,t\right)
=K\left( x,t\right) ,\;\;x\in
%TCIMACRO{\U{211d} }%
%BeginExpansion
\mathbb{R}
%EndExpansion
,\;t>0.  \label{u-za-delta}
\end{equation}%
Since $K,$ and therefore $u$ is a distribution in $x,$ it cannot be plotted.
Thus, we use the regularization $K_{\varepsilon }$ of the solution kernel so
that (\ref{u-za-delta}) becomes
\begin{equation}
u_{\varepsilon }\left( x,t\right) =\frac{1}{\pi }\int_{0}^{\infty }\hat{K}%
(\rho ,t)\mathrm{e}^{-\frac{\left( \varepsilon \rho \right) ^{2}}{4}}\cos
(\rho x)\mathrm{d}\rho ,\;\;x\in
%TCIMACRO{\U{211d} }%
%BeginExpansion
\mathbb{R}
%EndExpansion
,\;t>0.  \label{u-za-delta-gaus}
\end{equation}

In all figures that are to follow, we present the displacement field only on
the half-axis $x\geq 0,$ since the field is symmetric with respect to
displacement axis. Figure \ref{fig.1} presents the plot of the displacement
versus coordinate, obtained according to (\ref{u-za-delta-gaus}) for several
time instants, while the other parameters of the model are: $\alpha =0.25,$ $%
\beta =0.45,$ $\tau =0.1,$ $\varepsilon =0.01.$
\begin{figure}[tbph]
\centering
\includegraphics[scale=1]{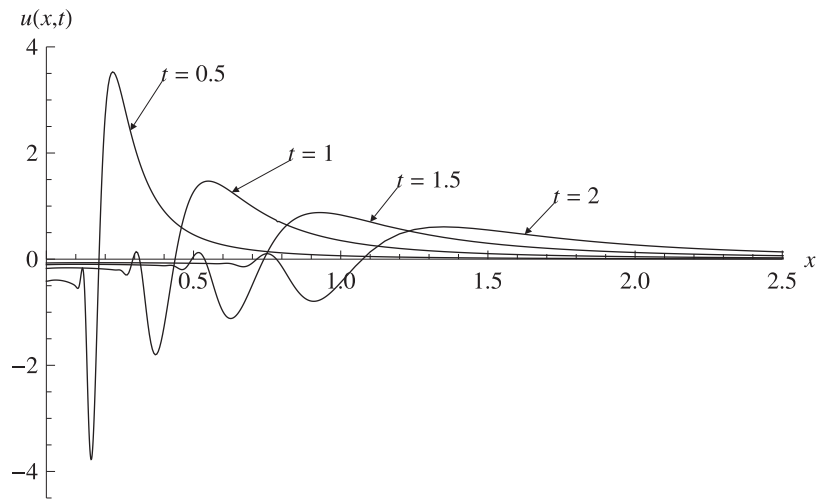}
\caption{Displacement $u\left( x,t\right) $ at $t\in \left\{
0.5,1,1.5,2\right\} $ as a function of $x$ as a solution of (\protect\ref%
{eq:fzwe}).}
\label{fig.1}
\end{figure}
From Figure \ref{fig.1} we see that as time increases, the height of the
peaks decreases, since the energy introduced by the initial disturbance
field is being dissipated. This is the consequence of the viscoelastic
properties of the material.

In Figure \ref{fig.2} we compared the displacements obtained as a solutions
for non-local (\ref{eq:fzwe}) and local (\ref{tfzwe}) wave equations, given
by (\ref{u-za-delta-gaus}) and (\ref{koz}), respectively.
\begin{figure}[tbph]
\centering
\includegraphics[scale=1]{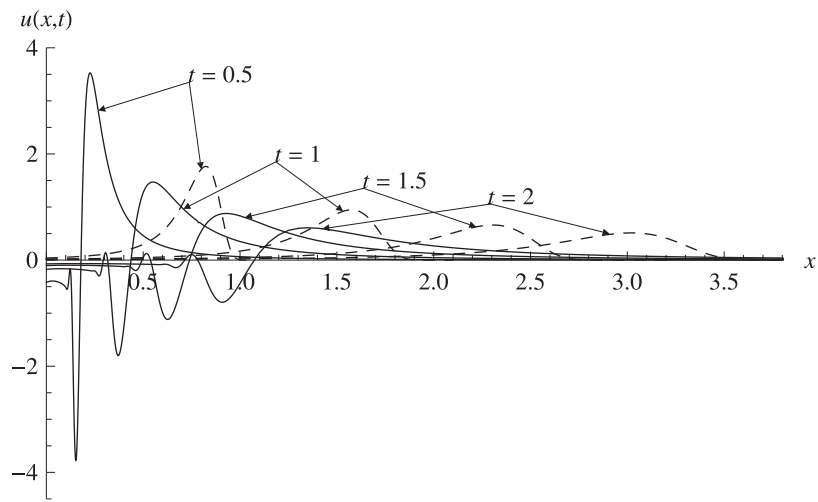}
\caption{Displacement $u\left( x,t\right) $ at $t\in \left\{
0.5,1,1.5,2\right\} $ as a function of $x$ as a solution of: (\protect\ref%
{eq:fzwe}) - solid line and (\protect\ref{tfzwe}) - dashed line.}
\label{fig.2}
\end{figure}
Apart from $\beta =0.45$ in (\ref{eq:fzwe}) and $\beta =1$ in (\ref{tfzwe})
other parameters in both models are as above. The effect of non-locality
introduced in the strain measure is observed, since at fixed time-instant,
apart from the primary peak that exists in both models, in the non-local one
there are secondary peaks in the displacement field. These secondary peaks
reflect the influence of the oscillations of a certain material point to
other material points in a medium. Thus, when the initial disturbance
propagates (this is reflected by the existence of the primary peak), due to
the non-locality, the secondary peaks reflect the residual influence of the
disturbance transported by the primary peak. Not only that non-locality
changes the number of peaks at a certain time-instant, but also the shape of
the primary peak is changed. The primary peak in the non-local model
(compared to the local one) is higher and placed closer to the origin - the
point where the initial Dirac-type disturbance field is introduced.

The aim of the following figures is to show the influence of changing the
non-locality parameter $\beta .$ All other parameters are as above. Figure %
\ref{fig.3} presents plots of displacements for various values of $\beta .$
\begin{figure}[tbph]
\centering
\includegraphics[scale=1]{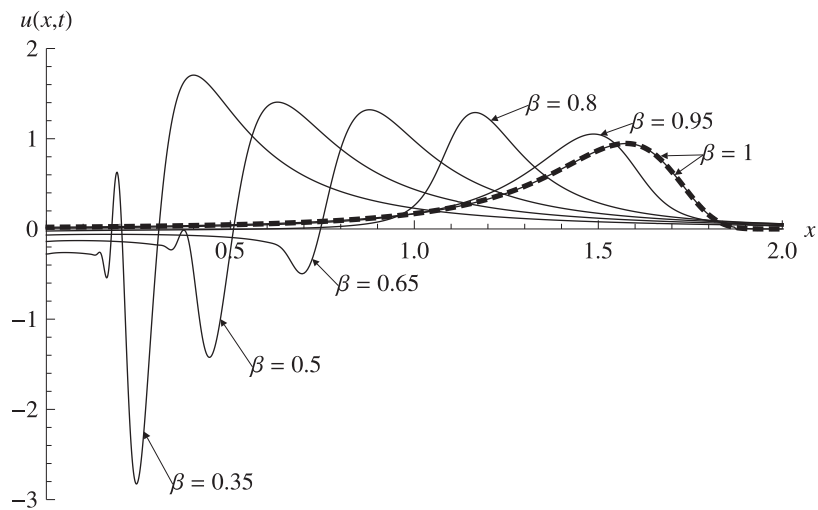}
\caption{Displacement $u\left( x,t\right) $ at $t=1$ as a function of $x$ as
a solution of: (\protect\ref{eq:fzwe}) - solid line and (\protect\ref{tfzwe}%
) - dashed line.}
\label{fig.3}
\end{figure}
From Figure \ref{fig.3} one sees that as the non-locality parameter $\beta $
increases, the effects of non-locality decrease, since the height of the
secondary peaks decreases and eventually the secondary peaks cease to exist.
Also, as $\beta $ increases, the height of the primary peaks decreases and
its position increases, being further from the point of the initial
Dirac-type disturbance. In the limiting case $\beta =1$ the displacement
curve of non-local model (\ref{eq:fzwe}) overlaps with the displacement
curve of the local model (\ref{tfzwe}). Figures \ref{fig.4} and \ref{fig.5}
present the displacement field for smaller values of $\beta .$
\begin{figure}[h]
\begin{minipage}{65mm}
 \centering
 \includegraphics[scale=0.7]{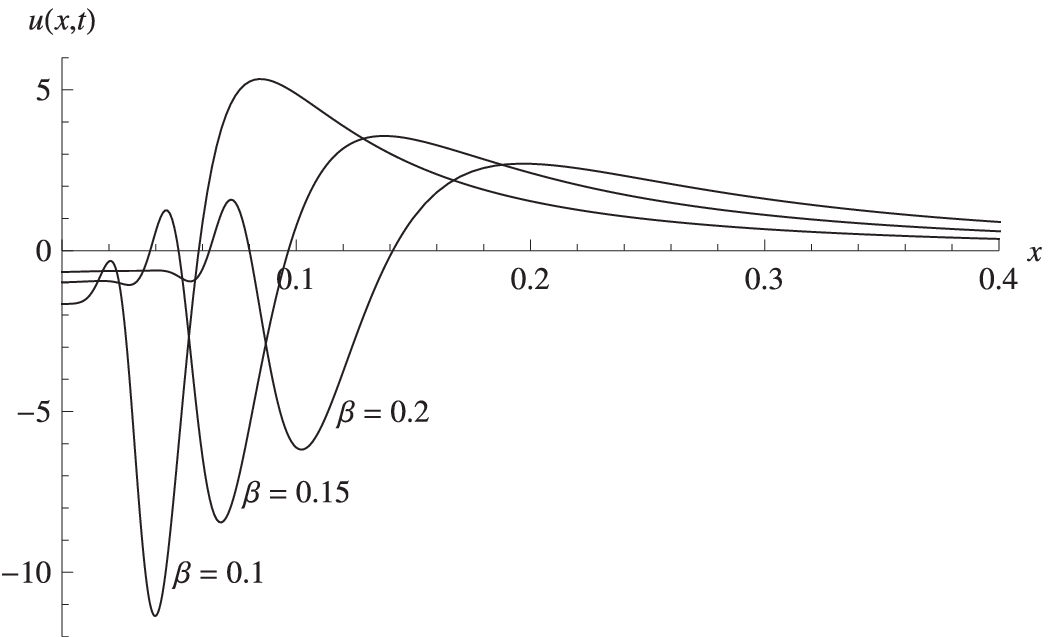}
 \caption{Displacement $u\left( x,t\right) $ at $t=1$ as a function of $x$ as a solution of (\ref{eq:fzwe}).}
 \label{fig.4}
 \end{minipage}
\hfil
\begin{minipage}{65mm}
 \centering
 \includegraphics[scale=0.7]{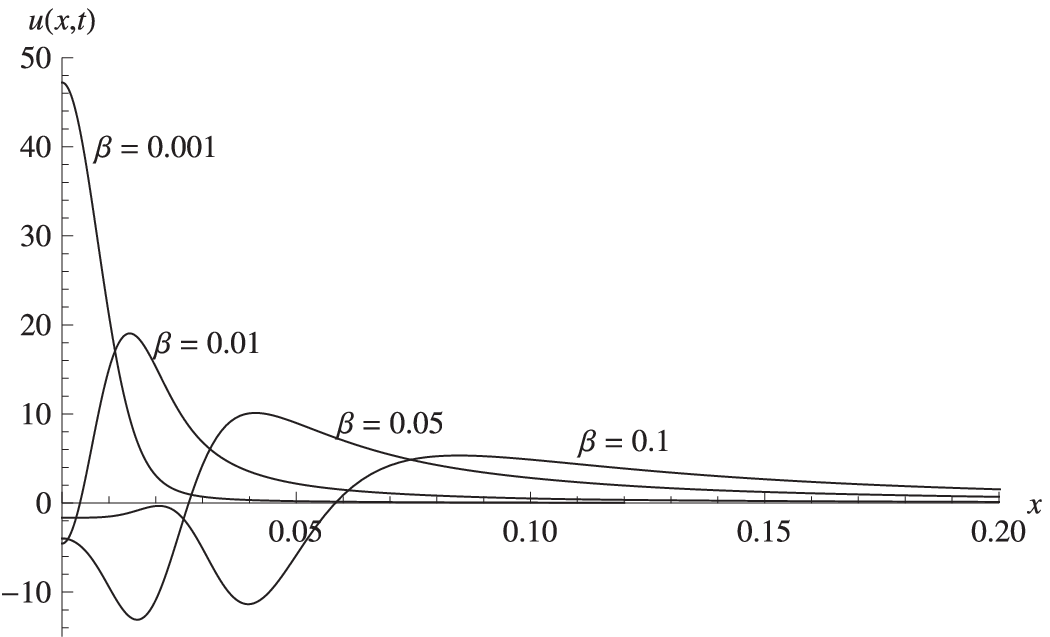}
 \caption{Displacement $u\left( x,t\right) $ at $t=1$ as a function of $x$ as a solution of (\ref{eq:fzwe}).}
 \label{fig.5}
 \end{minipage}
\end{figure}
In Figure \ref{fig.4} one notices the significant influence of non-local
effects. Namely, the secondary peaks are more prominent in height than the
primary peak. Finally, in the limiting case when $\beta \rightarrow 0$ one
expects to obtain the displacement field in the form (\ref{non-prop}), i.e.,
$u\left( x,t\right) =\delta \left( x\right) ,$ $x\in \mathbb{R},$ $t>0.$
This can be seen from Figure \ref{fig.5}. We might, thus, say that these
numerical examples supports the claim that $\beta ,$ as the non-locality
parameter, measures the resistance of material to the disturbance
propagation. Namely, at the same time instant, as $\beta $ decreases, the
primary peak is placed closer to the point where the Dirac-type disturbance
occurred and for $\beta \rightarrow 0$ we obtain the non-propagating
disturbance. Also, the shape of the primary peaks changes and, as $\beta $
increases, they become more alike the peak of the local model and for $\beta
=1$ we have the overlap of the curves.

\appendix

\section{Mathematical background \label{sec:math prelim}}

This section serves as a mathematical survey needed in analysis that we have
presented. We single out definitions and properties of fractional
derivatives and since our main tools are integral transforms, we recall,
more or less well-known, main definitions and properties used. For a
detailed exposition of the theory of fractional calculus see \cite{Pod,SKM},
and for the spaces and integral transforms we refer to \cite{SKM,vlad}.

Let $0\leq \alpha <1$, $-\infty \leq a<b\leq \infty $. The left and right
Caputo derivatives, of order $\alpha $, of an absolutely continuous function
$u$ are defined by
\begin{equation}
{}_{a}^{C}\mathrm{D}_{t}^{\alpha }u(t)=\frac{1}{\Gamma (1-\alpha )}%
\int_{a}^{t}\frac{u^{\prime }(\theta )}{(t-\theta )^{\alpha }}\mathrm{d}%
\theta ,\;\;\text{and}\;\;{}_{t}^{C}\mathrm{D}_{b}^{\alpha }u(t)=-\frac{1}{%
\Gamma (1-\alpha )}\int_{t}^{b}\frac{u^{\prime }(\theta )}{(\theta
-t)^{\alpha }}\mathrm{d}\theta ,  \label{eq:Caputo-levi}
\end{equation}%
where $\Gamma $ is the Euler gamma function and $u^{\prime }=\frac{\mathrm{d}%
}{\mathrm{d}t}u$. Note that ${}_{a}^{C}\mathrm{D}_{t}^{0}u(t)={}_{t}^{C}%
\mathrm{D}_{b}^{0}u(t)=u(t)$, and for continuously differentiable functions
and distributions we have that as $\alpha \rightarrow 1^{-}$, ${}_{a}^{C}%
\mathrm{D}_{t}^{1}u(t)\rightarrow u^{\prime }(t)$, ${}_{t}^{C}\mathrm{D}%
_{b}^{1}u(t)\rightarrow -u^{\prime }(t)$. Therefore, the Caputo derivatives
generalize integer order derivatives.

Let $0\leq \beta <1$, $-\infty \leq a<b\leq \infty $. The symmetrized
fractional derivative of an absolutely continuous function $u$ is defined as
\begin{eqnarray}
{}_{a}^{C}\mathcal{E}_{b}^{\beta }u(x) &=&\frac{1}{2}\left( {}_{a}^{C}%
\mathrm{D}_{x}^{\beta }-{}_{x}^{C}\mathrm{D}_{b}^{\beta }\right) u(x)
\label{eq:Caputo-simetrizovani} \\
&=&\frac{1}{2}\frac{1}{\Gamma (1-\beta )}\int_{a}^{b}\frac{u^{\prime
}(\theta )}{|x-\theta |^{\beta }}\mathrm{d}\theta .  \notag
\end{eqnarray}%
For $a=-\infty $ and $b=\infty $ we write $\mathcal{E}_{x}^{\beta }$ instead
of ${}_{a}^{C}\mathcal{E}_{b}^{\beta }$ and then
\begin{equation*}
\mathcal{E}_{x}^{\beta }u(x)=\frac{1}{2}\frac{1}{\Gamma (1-\beta )}%
|x|^{-\beta }\ast u^{\prime }(x).
\end{equation*}%
Note that $\mathcal{E}_{x}^{0}u(x)=0$ and $\mathcal{E}_{x}^{\beta
}u(x)\rightarrow u^{\prime }(x)$, as $\beta \rightarrow 1$. So, the
symmetrized fractional derivative generalizes the first derivative of a
function. The zeroth order symmetrized fractional derivative of a function
is zero (not a function itself).

For fractional operators in the distributional setting, one introduces a
family $\{f_{\alpha }\}_{\alpha \in \mathbb{R}}\in \mathcal{S}_{+}^{\prime }$
as
\begin{equation*}
f_{\alpha }(t)=\left\{
\begin{array}{ll}
H(t)\frac{t^{\alpha -1}}{\Gamma (\alpha )}, & \alpha >0, \\
\frac{\mathrm{d}^{N}}{\mathrm{d}t^{N}}f_{\alpha +N}(t), & \alpha \leq
0,\alpha +N>0,N\in \mathbb{N},%
\end{array}%
\right.
\end{equation*}%
and $\{\check{f}_{\alpha }\}_{\alpha \in \mathbb{R}}\in \mathcal{S}%
_{-}^{\prime }$ as
\begin{equation*}
\check{f}_{\alpha }(t)=f_{\alpha }(-t),
\end{equation*}%
where $H$ is the Heaviside function. Then $f_{\alpha }\ast $ and $\check{f}%
_{\alpha }\ast $ are convolution operators and for $\alpha <0$ they are
operators of left and right fractional differentiation, so that for $u$
absolutely continuous we have
\begin{equation*}
{}_{0}^{C}\mathrm{D}_{t}^{\alpha }u=f_{1-\alpha }\ast u^{\prime }\;\;\text{%
and}\;\;{}_{t}^{C}\mathrm{D}_{a}^{\alpha }u=-\check{f}_{1-\alpha }\ast
u^{\prime }.
\end{equation*}

For $u\in \mathcal{S}^{\prime }$ the Fourier transform is defined as
\begin{equation*}
\left\langle \hat{u},\varphi \right\rangle =\left\langle u,\hat{\varphi}%
\right\rangle ,\;\;\varphi \in \mathcal{S}(\mathbb{R}),
\end{equation*}%
where for $\varphi \in \mathcal{S}$
\begin{equation*}
\hat{\varphi}(\xi )=\mathcal{F}\left[ \varphi \left( x\right) \right] (\xi
)=\int_{-\infty }^{\infty }\varphi (x)\mathrm{e}^{-\mathrm{i}\xi x}\mathrm{d}%
x,\;\;\xi \in \mathbb{R}.
\end{equation*}

The Laplace transform of $u\in \mathcal{S}^{\prime }$ is defined by
\begin{equation*}
\tilde{u}(s)=\mathcal{L}\left[ u\left( t\right) \right] (s)=\mathcal{F}\left[
\mathrm{e}^{-\xi t}u\left( t\right) \right] (\eta ),\;\;s=\xi +\mathrm{i}%
\eta .
\end{equation*}%
It is well known that the function $\tilde{u}$ is holomorphic in the half
plane $\func{Re}s>0$, see e.g. \cite{vlad}. In particular, for $u\in L^{1}(%
\mathbb{R})$ such that $u(t)=0$, for $t<0$, and $|u(t)|\leq A\mathrm{e}^{at}$
($a,A>0$) the Laplace transform is
\begin{equation*}
\tilde{u}(s)=\int_{0}^{\infty }u(t)\mathrm{e}^{-st}\mathrm{d}t,\;\;\func{Re}%
s>0.
\end{equation*}

We recall main properties of the Fourier and Laplace transforms. Let $%
u,u_{1},u_{2}\in \mathcal{S}^{\prime }$%
\begin{align*}
\mathcal{F}& [u_{1}\ast u_{2}](\xi )=\mathcal{F}u_{1}(\xi )\cdot \mathcal{F}%
u_{1}(\xi ),\quad \mathcal{F}\left[ u^{(n)}\right] (\xi )=(\mathrm{i}\xi
)^{n}\mathcal{F}y(\xi ),\,n\in \mathbb{N},\quad \mathcal{F}\delta (s)=1, \\
\mathcal{L}& [u_{1}\ast u_{2}](s)=\mathcal{L}u_{1}(s)\cdot \mathcal{L}%
u_{1}(s),\quad \mathcal{L}[{}_{0}\mathrm{D}_{t}^{\alpha }u](s)=s^{\alpha }%
\mathcal{L}u(s),\,\alpha \geq 0,\quad \mathcal{L}\delta (s)=1,
\end{align*}%
where $(\cdot )^{(n)}$ denotes $n$-th derivative. For $\beta \in \left[
0,1\right) $ it holds%
\begin{gather*}
\mathcal{F}\left[ |x|^{-\beta }\right] (\xi )=2\Gamma (1-\beta )\sin \frac{%
\beta \pi }{2}\frac{1}{|\xi |^{1-\beta }}, \\
\mathcal{F}\left[ \mathcal{E}_{x}^{\beta }u\left( x\right) \right] (\xi )=%
\mathrm{i}\frac{\xi }{|\xi |^{1-\beta }}\sin \frac{\beta \pi }{2}\hat{u}(\xi
).
\end{gather*}

\section*{Acknowledgement}

This research is supported by the Serbian Ministry of Education and Science
projects $174005$, $174024,$ III$44003$ and TR$32035$, as well as by the
Secretariat for Science of Vojvodina project $114-451-3605/2013$.

The authors would also like to thank to Radovan Obradovi\'{c} for his
valuable suggestions concerning the calculations in numerical examples.

%\bibliographystyle{plain}
%\bibliography{dz}

\end{document}